\newcommand{\blind}{1}
\newtheorem{remark}{Remark}
\newtheorem{theorem}{Theorem}
\newtheoremstyle{indented}{5pt}{5pt}{\addtolength{\leftskip}{2.5em}}{}{\bfseries}{.}{.7em}{}
  \theoremstyle{definition}
  \newtheorem{assumption}{}
  \theoremstyle{definition}
  \newtheorem{assumptioniden}{}
\newtheorem{lemma}{Lemma}
\DeclareMathOperator{\expit}{expit}
\DeclareMathOperator{\logit}{logit}
\DeclareMathOperator{\var}{Var}
\DeclareMathOperator{\dr}{dr}
\DeclareMathOperator{\tmlee}{tmle}
\DeclareMathOperator{\aipww}{aipw}
\DeclareMathOperator{\opt}{opt}
\DeclareMathOperator{\dtmlee}{dtmle}
\DeclareMathOperator{\daipww}{daipw}
\newcommand{\Pn}{\mathbb{P}_{n}}
\newcommand{\daipw}{\hat\theta_{\daipww}}
\newcommand{\hopt}{\hat h_{\opt}}
\newcommand{\aipw}{\hat\theta_{\aipww}}
\newcommand{\dtmle}{\hat\theta_{\dtmlee}}
\newcommand{\tmle}{\hat\theta_{\tmlee}}
\newcommand{\thetac}{\theta_{\mbox{\footnotesize causal}}}
\renewenvironment{proof}{{\it Proof }}{\qed \\}
\title{Doubly Robust Inference for Targeted Minimum Loss Based
  Estimation in Randomized Trials with Missing Outcome Data}
\author[1]{Iv\'an  D\'iaz \thanks{corresponding author: ild2005@med.cornell.edu}}
\author[2]{Mark J. van der Laan}
\affil[1]{\small Division of Biostatistics, Weill Cornell Medicine.}
\affil[2]{\small Division of Biostatistics, University of California
  at Berkeley.}
\author[1]{\vspace{-1.5cm} }
\begin{document}\maketitle

\begin{abstract}
  Missing outcome data is one of the principal threats to the validity
  of treatment effect estimates from randomized trials. The outcome
  distributions of participants with missing and observed data are
  often different, which increases the risk of bias. Causal inference
  methods may aid in reducing the bias and improving efficiency by
  incorporating baseline variables into the analysis. In particular,
  doubly robust estimators incorporate estimates of two nuisance
  parameters: the outcome regression and the missingness mechanism
  (i.e., the probability of missingness conditional on treatment
  assignment and baseline variables), to adjust for differences in the
  observed and unobserved groups that can be explained by observed
  covariates. To obtain consistent estimators of the treatment effect,
  one of these two nuisance parameters mechanism must be consistently
  estimated. Such nuisance parameters are traditionally estimated
  using parametric models, which generally preclude consistent
  estimation, particularly in moderate to high dimensions. Recent
  research on missing data has focused on data-adaptive estimation of
  the nuisance parameters in order to achieve consistency, but the
  large sample properties of such estimators are poorly understood. In
  this article we discuss a doubly robust estimator that is consistent
  and asymptotically normal (CAN) under data-adaptive consistent
  estimation of the outcome regression \textit{or} the missingness
  mechanism.  We provide a formula for an asymptotically valid
  confidence interval under minimal assumptions.  We show that our
  proposed estimator has smaller finite-sample bias compared to
  standard doubly robust estimators. We present a simulation study
  demonstrating the enhanced performance of our estimators in terms of
  bias, efficiency, and coverage of the confidence intervals. We
  present the results of an illustrative example: a randomized,
  double-blind phase II/III trial of antiretroviral therapy in
  HIV-infected persons, and provide R code implementing our proposed
  estimators.
\end{abstract}

\section{Introduction}

Missing data are a frequent problem in randomized trials. If the
reasons for outcome missingness and the outcome itself are correlated,
unadjusted estimators of the treatment effect are biased, thus
invalidating the conclusions of the trial. Most methods to mitigate
the bias rely on baseline variables to control for the possible common
causes of missingness and the outcome, through estimation of certain
``nuisance'' parameters, i.e., parameters that are not of interest in
themselves, but that are required to estimate the treatment effect. In
addition to aiding in correcting bias, methods that use covariate
adjustment often provide more precise estimates \cite[see,
e.g.,][]{koch1998issues,Bang05,Zhang2008,moore2009covariate,Colantuoni2015, Diaz2016}. In this
article we focus on doubly robust estimators.  Doubly robust
estimation of treatment effects in randomized trials requires
estimation of two possibly high-dimensional nuisance parameters: the
outcome expectation within treatment arm conditional on baseline
variables (henceforth referred to as outcome regression), and the
probability of missingness conditional on baseline variables
(henceforth referred to as missingness mechanism).

The large sample properties of doubly robust estimators hinges upon
large sample properties of the estimators of the nuisance
parameters. In particular:
\begin{enumerate*}[label=(\alph*)]
\item doubly robust estimators remain consistent if at least one of
  the nuisance parameters is estimated consistently, and\label{prop:a}
\item the asymptotic distribution of the effect estimator depends on
  empirical process conditions on the estimators of the nuisance
  parameters.\label{prop:b}
\end{enumerate*}
When parametric models are adopted to estimate the nuisance
parameters, a straightforward application of the delta method yields
the convergence of the doubly robust estimator to a normal random
variable at $n^{1/2}$-rate. The nonparametric bootstrap or an
influence function based approach yields consistent estimates of the
asymptotic variance and confidence intervals. However, the assumptions
encoded in parametric models are rarely justified by scientific
knowledge. This implies that parametric models are frequently
misspecified, which yields an inconsistent effect estimator. In other
words, a doubly robust estimators relying on nuisance parametric
models makes no use of the double robustness property \ref{prop:a}: it
is always inconsistent.

Data-adaptive alternatives to alleviate this shortcoming have been
developed over the last decades in the statistics and machine learning
literature. These data-adaptive methods offer an opportunity to employ
flexible estimators that are more likely to achieve
consistency. Methods such as those based on regression trees,
regularization, boosting, neural networks, support vector machines,
adaptive splines, etc., and ensembles of them offer flexibility in the
specification of interactions, non-linear, and higher-order terms, a
flexibility that is not available for parametric models. However, the
large sample analysis of treatment effects estimates based on
machine learning requires hard-to-verify assumptions, and often yield
estimators which are not $n^{1/2}$-consistent, and for which no
statistical inference (i.e., p-values and confidence intervals) is
available. Nonetheless, data-adaptive estimation has been widely used
in estimation of causal effects from observational data \citep[a few
examples include][]{vanderLaan&Petersen&Joffe05,
  Wang&Bembom&vanderLaan06,ridgeway2007, Bembometal08a,
  lee2010improving,neugebauer2016case}. Indeed, the statistics field
of \textit{targeted learning} \citep[see e.g.,][]{vanderLaan&Rubin06,
  vanderLaanRose11, van2014entering} is concerned with the development
of optimal ($n^{1/2}$-consistent, asymptotically normal, efficient)
estimators of smooth low-dimensional parameters through the use
state-of-the art machine learning.

We develop estimators for analyzing data from randomized trials with
missing outcomes, when the missingness probabilities and the outcome
regression are estimated with data-adaptive methods. 
We propose two estimators: an augmented inverse probability weighted
estimator (AIPW), and a targeted minimum loss based estimator
(TMLE). Our methods are inspired by recent work by
\cite{van2014targeted, benkeser2016doubly}, who developed an estimator
of the mean of an outcome from incomplete data when data-adaptive
estimators are used for the missingness mechanism. In addition to
extending their methodology to our problem, our main contribution is
to simplify the assumptions of their theorems to two conditions:
consistent estimation of at least one of the nuisance parameters, and
a condition restricting the class of estimators of the nuisance
parameters to Donsker classes (those for which a uniform central limit
theorem applies). Though the Donsker condition may be removed through
the use of a cross-validated version of our TMLE, the results are
straightforward extensions of the work of \cite{zheng2011cross}, and
we do not pursue such results here. We show that the doubly robust
asymptotic distribution of these novel estimators requires a slightly
stronger version of the standard double robustness in which the
nuisance parameters converge to their (possibly misspecified) limits
at $n^{1/4}$-rate, with at least one of them converging to the correct
limit. Specifically, we show that the TMLE is CAN under these
empirical process conditions, and provide its influence function. This
allows the construction of Wald-type confidence intervals under the
assumption that at least one of the nuisance parameters is
consistently estimated, though it is not necessary to know which
one. We also make connections between the proposed estimators and
standard $M$-estimation theory, by noting that our estimators
\citep[and those of][]{van2014targeted, benkeser2016doubly} amount to
controlling the behavior of the ``drift'' term resulting from the
analysis of the estimator's empirical process. Thus, our methods and
theory may be used to improve the performance of other $M$-estimators
in causal inference and missing data problems. The need to control the
behavior of such terms has been previously recognized in the
semiparametric estimation literature, for example in Theorem 5.31 of
\cite{vanderVaart98} \cite[see also Section 6.6
of][]{bolthausen2002lectures}.

In related work, \cite{vermeulen2015bias, vermeulen2016data} recently
proposed estimators that also target minimization of the drift
term. However, their methods are not suitable for our application
because they rely on parametric working models for the missingness
mechanism. Since we do not know the functional form of the missingness
mechanism, we must resort to data-adaptive methods to estimate this
probability.

The paper is organized as follows. In Section~\ref{sec:applica} we
discuss our illustrative application and define the statistical
estimation
problem. 
In Section~ \ref{sec:existing} we present estimators from existing
work; in Section~\ref{sec:proposal} we discuss possible ways of
repairing the AIPW, and show that such repairs do not help us achieve
desirable properties such as asymptotic linearity. In
Section~\ref{sec:tmle} we present our proposed TML estimator an show
that it is asymptotically normal with known \textit{doubly robust
  asymptotic distribution}, where the latter concept means that the
distribution is known under consistent estimation of at least one
nuisance parameter. Simulation studies are presented in
Section~\ref{sec:simula}. These simulation studies demonstrate that
our estimators can lead to substantial bias reduction, as well as
improved coverage of the Wald-type confidence
intervals. Section~\ref{sec:discussion} presents some concluding
remarks and directions of future research.

\section{Illustrative Application}\label{sec:applica}
We illustrate our methods in the analysis of data from the ACTG 175
study \citep{hammer1996trial}. ACTG 175 was a randomized clinical
trial in which 2139 adults infected with the human immunodeficiency
virus type I, whose CD4 T-cell counts were between 200 and 500 per
cubic millimeter, were randomized to compare four antiretroviral
therapies: zidovudine (ZDV) alone, ZDV+didanosine(ddI),
ZDV+zalcitabine(ddC), and ddI alone.

One goal of the study was to compare the four treatment arms in terms
of the CD4 counts at week 96 after randomization. By week 96, 797
(37.2\%) subjects had dropped out of the study. Dropout rates varied
between 35.7-39.6\% across treatment arms. The investigators found
dropout to be associated to patient characteristics such as ethnicity
and history of injection-drug use, which are also associated with the
outcome, therefore causing informative missingness. Other baseline
variables collected at the beginning of the study include age, gender,
weight, CD4 count, hemophilia, homosexual activity, the Karnofsky
score, and prior antiretroviral therapy.

\subsection{Observed Data and Notation}
Let $W$ denote a vector of observed baseline variables, let $A$ denote
a binary treatment arm indicator (e.g., in our application we have
four such indicators). Let $Y$ denote the outcome of interest,
observed only when a missingness indicator $M$ is equal to
one. Throughout, we assume without loss of generality that $Y$ takes
values on $[0, 1]$.  We use the word \textit{model} in the classical
statistical sense to refer to a set of probability distributions for
the observed data $O=(W, A, M, MY)$. We assume that the true
distribution of $O$, denoted by $P_0$, is an element of the
nonparametric model, denoted by $\cal M$, and defined as the set of
all distributions of $O$ dominated by a measure of interest $\nu$. The
word \textit{estimator} is used to refer to a particular procedure or
method for obtaining estimates of $P_0$ or functionals of it. Assume
we observe an i.i.d. sample $O_1,\ldots,O_n$, and denote its empirical
distribution by $\Pn $. For a general distribution $P$ and a function
$f$, we use $Pf$ to denote $\int f(o)dP(o)$. We use $m(w)$ to denote
$E(Y\mid M=1,A=1,W=w)$, $g_A(w)$ to denote $P(A=a\mid W=w)$, and
$g_M(w)$ to denote $P(M=1\mid A=1,W=w)$. The index naught is added
when the expectation and probabilities are computed under $P_0$ (i.e.,
$m_0$, $g_{A,0}$, and $g_{M,0}$). We define $g(w)= g_A(w)g_M(w)$.

\subsection{Treatment Effect in Terms of Potential Outcomes and
  Identification}
Define the potential outcome $Y_1$ as the outcome that
would have been observed had study arm $A=1$ and missingness $M=1$
been externally set with probability one. The target estimand is
defined as $\thetac=E(Y_1)$. The index ``causal'' denotes a parameter
of the distribution of the potential outcome $Y_1$. We show that
$\thetac$ can be equivalently expressed as a parameter $\theta$ of the
observed data distribution $P_0(W,A,M,MY)$, under \ref{ass:cons}-4
below. This is useful since the potential outcome is not observed, in
contrast to the data vector $(W,A,M,MY)$, which we can make inferences
about. Define the following assumptions:
\begin{assumptioniden}[Consistency]\label{ass:cons}
  $Y=M\{AY_1 + (1-A)Y_0\}$,
\end{assumptioniden}
\begin{assumptioniden}[Randomization]\label{ass:random}
  $A$ is independent of $Y_1$ conditional on $W$,
\end{assumptioniden}
\begin{assumptioniden}[Missing at random]\label{ass:mar}
  $M$ is independent of $Y_1$ conditional on $(A,W)$,
\end{assumptioniden}
\begin{assumptioniden}[Positivity]\label{ass:pos}
  $g(w)>0$ with probability one over draws of $W$.
\end{assumptioniden}
\ref{ass:cons} connects the potential outcomes to the observed
outcome. \ref{ass:random} holds by design in a randomized trial such
as our illustrative example. \ref{ass:mar}, which is similar to that
in \cite{rubin1987multiple}, means that missingness is random within
strata of treatment and baseline variables (which is often abbreviated
as ``missing at random'', or MAR).  Equivalently, the MAR assumption
may be interpreted as the assumption that all common causes of
missingness and the outcome are observed and form part of the vector
of baseline variables $W$. \ref{ass:pos} guarantees that $m_0$ is well
defined.

Under \ref{ass:cons}-4 above, our target estimand $\thetac$ is
identified as $\theta_0 = E_{P_0}\{m_0(W)\}.$ Note that this parameter
definition allows us to compute the parameter value at any
distribution $P$ in the model $\mathcal M$. According to this
observation, we use the notation $\theta(P)=E_{P}\{m(W)\}$, where
$\theta_0=\theta(P_0)$.

\subsection{Data Analysis}
We present the results of applying our estimators to the ACTG data. To
estimate the probability of missingness conditional on baseline
variables $g_M$, we fit an ensemble predictor known as super learning
\citep{vanderLaan&Polley&Hubbard07, SL} to the missingness indicator
in each treatment arm. Super learning builds a convex combination of
predictors in a user-given library, where the combination weights are
chosen such that the cross-validated prediction risk is minimized. For
predicting probabilities, we define the prediction risk as the average
of the negative log-likelihood of a Bernoulli variable. The algorithms
used in the ensemble along with their weights are presented in
Table~\ref{tab:slcoef}. Note that the algorithms that more accurately
predict missingness are data-adaptive algorithms with flexible
functional forms, or algorithms that incorporate some type of variable
selection.

\begin{table}[!htb]
  \centering
  \begin{tabular}{l|cccc}
    \hline
    & \multicolumn{4}{c}{Treatment arm} \\
    Algorithm & ZVD & ZVD+ddI & ZVD+ddC & ddI \\
    \hline
    GLM & 0.00 & 0.00 & 0.00 & 0.00 \\
    Lasso & 0.02 & 0.21 & 0.00 & 0.85 \\
    Bayes GLM & 0.21 & 0.38 & 0.19 & 0.00 \\
    GAM & 0.00 & 0.00 & 0.02 & 0.00 \\
    MARS & 0.78 & 0.38 & 0.30 & 0.15 \\
    Random Forest & 0.00 & 0.03 & 0.49 & 0.00 \\\hline
  \end{tabular}
  \caption{Coefficients in the super learner convex combination for
    predicting 96 week dropout.}
  \label{tab:slcoef}
\end{table}

We also use the super learner to estimate the expected CD4 T-cell
count at 96 weeks after randomization among subjects still in the
study, conditional on covariates. The prediction risk in this case is
defined as the average of the squared prediction residuals. The
results are presented in Table~\ref{tab:slocoef}. For the outcome
regression, the best predictive algorithms are also data-adaptive.

\begin{table}[!htb]
  \centering
  \begin{tabular}{l|cccc}
    \hline
    & \multicolumn{4}{c}{Treatment arm} \\
    Algorithm & ZVD & ZVD+ddI & ZVD+ddC & ddI \\
    \hline
    GLM & 0.00 & 0.00 & 0.00 & 0.00 \\
    Lasso & 1.00 & 0.30 & 0.08 & 0.60 \\
    Bayes GLM & 0.00 & 0.02 & 0.00 & 0.00 \\
    GAM & 0.00 & 0.00 & 0.60 & 0.34 \\
    MARS & 0.00 & 0.00 & 0.00 & 0.06 \\
    Random Forest & 0.00 & 0.68 & 0.32 & 0.00 \\\hline
  \end{tabular}
  \caption{Coefficients in the super learner convex combination for
    predicting CD4 T-cell count.}
  \label{tab:slocoef}
\end{table}

The results in Tables~\ref{tab:slcoef} and \ref{tab:slocoef} highlight
the need to use data-adaptive estimators for the nuisance parameters
in the construction of a doubly robust estimator for $\theta_0$. As we
show below in Section~\ref{sec:existing}, standard doubly robust
estimators are not guaranteed to have desirable properties such as
$n^{1/2}$-consistency and doubly robust asymptotic linearity when such
data-adaptive estimators are used. This motivates the construction of
the estimators we propose.

Figure~\ref{fig:estima} shows the estimated CD4 T-cell count for each
treatment arm according to several estimators, along with their
corresponding 95\% confidence intervals. The targeted maximum
likelihood estimator \citep[TMLE][]{vanderLaanRose11} and the
augmented inverse-probability weighted estimator (AIPW) are standard
doubly robust estimators, whereas DTMLE and DAIPW are the
modifications described in Section~\ref{sec:proposal} below. 
Unlike the TMLE and AIPW, the confidence
intervals of the DTMLE is expected to have correct asymptotic coverage
under consistent estimation of at least one nuisance parameter
(Theorem~\ref{theo:dr}). Unfortunately, the same claim does not seem
to hold for the DAIPW, although we expect this estimator to have
similar properties to the DTMLE in finite samples. For reference, we
also present the unadjusted estimate obtained by computing the
empirical mean of the outcome within each treatment arm among subjects
with observed outcomes.

\begin{figure}[!htb]
  \centering
  \includegraphics[scale = 0.3]{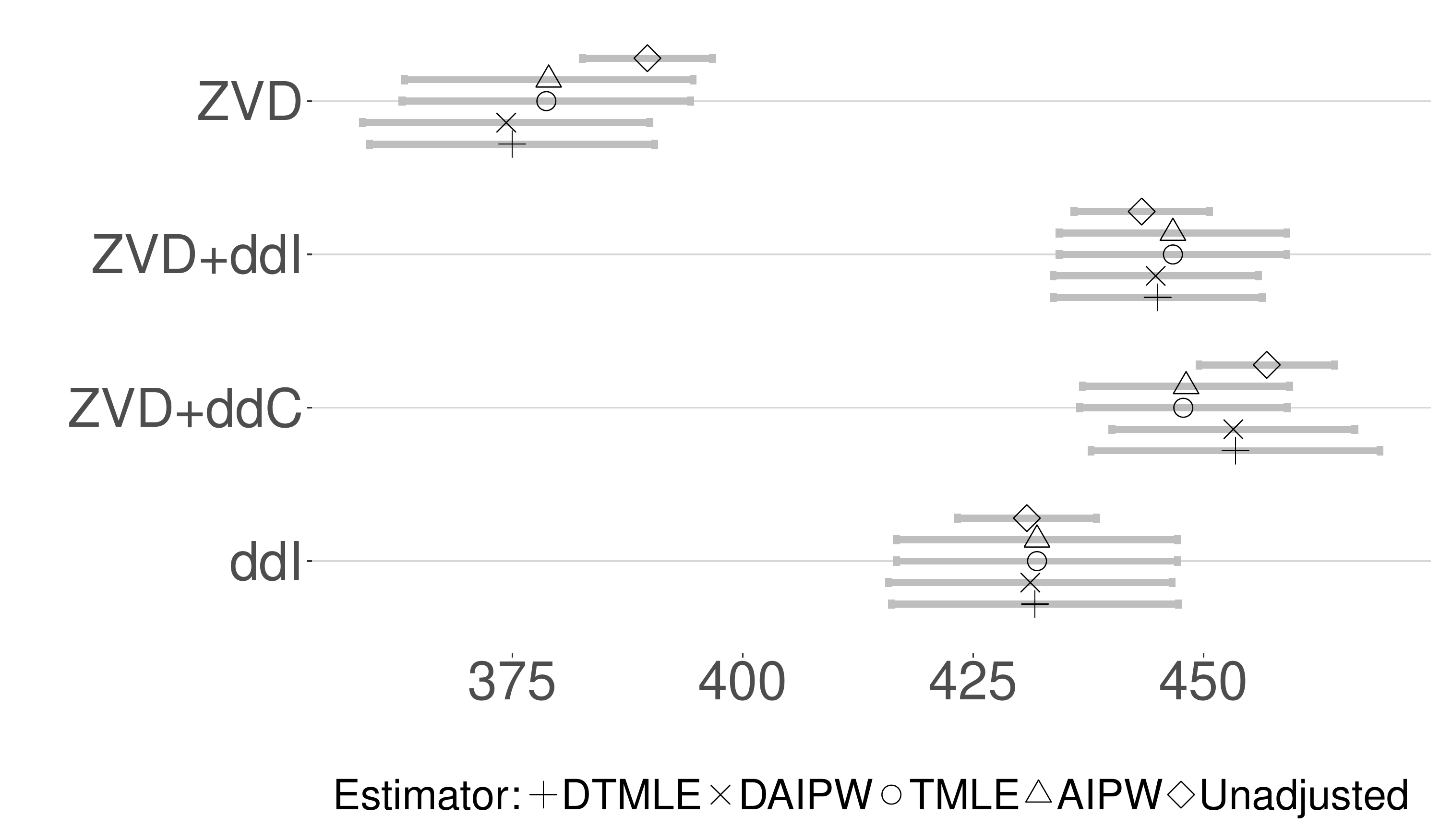}
  \caption{Estimated CD4 T-cell count on week 96 in each treatment arm, according
    to several estimators, along with confidence intervals.}
  \label{fig:estima}
\end{figure}

The dataset is available in the R package \texttt{speff2trial}
\citep{speff2trial}, the super learner predictor was computed using
the package \texttt{SuperLearner} \citep{SL}. R code to compute these
estimators is given in Appendix~\ref{sec:code}.

\section{Existing Estimators from the Semiparametric Efficiency Literature}\label{sec:existing}
We start by presenting the efficient influence function for estimation
of $\theta_0$ in model $\cal M$ \citep[see][]{hahn1998role}:
\begin{equation}
D_{\eta, \theta}(O) = \frac{A\,M}{g(W)}\{Y-m(W)\} + m(W) -
  \theta,\label{eq:defD}
\end{equation}
where we have denoted $\eta=(g, m)$. The efficient influence function
$D_{\eta,\theta}$ is a fundamental object for the analysis and
construction of estimators of $\theta_0$ in the non-parametric model
$\mathcal M$. First, it is a doubly robust estimating function, i.e.,
for given estimators $\hat m$ and $\hat g$ of $m_0$ and $g_0$,
respectively, an estimator that solves for $\theta$ in the following
estimating equation is consistent if at least one of $ m_0$ or $g_0$
is estimated consistently \citep[while the other converges to a limit
that may be incorrect, see Theorem 5.9 of][]{vanderVaart98}:
\begin{equation}
  \sum_{i=1}^n\frac{A_i\,M_i}{\hat g(W_i)}\{Y_i-\hat m(W_i)\} + \sum_{i=1}^n\left\{\hat m(W_i) -
    \theta\right\}=0.\label{eq:EE}
\end{equation}
The estimator constructed by directly solving for $\theta$ in the
above equation is often referred to as the augmented IPW estimator,
and we denote it by $\aipw$. Second, the efficient influence function
(\ref{eq:defD}) characterizes the efficiency bound for estimation of
$\theta_0$ in the model $\mathcal M$. Specifically, under consistent
estimation of $m_0$ and $g_0$ at a fast enough rate (which we define
below), an estimator that solves (\ref{eq:EE}) has variance smaller or
equal to that of any regular, asymptotically linear estimator of
$\theta_0$ in $\mathcal M$. This property is sometimes called
\textit{local efficiency}.

The augmented IPW has been criticized because directly solving the
estimating equation (\ref{eq:EE}) may drive the estimate out of bounds
of the parameter space \citep[see e.g.,][]{Gruber2010t}, which may
lead to poor performance in finite samples. Alternatives to repair the
AIPW have been discussed by \cite{Kang2007, Robins2007,
  tan2010bounded}. One such approach consists in solving the
estimating equation (\ref{eq:EE}) with the first term in the left hand
side divided by the empirical mean of the weights $A\,M/\hat
g(W)$. Alternatively, the targeted minimum loss based estimation
(TMLE) approach of \cite{vanderLaan&Rubin06, vanderLaanRose11}
provides a more principled method to construct estimators that stay
within natural bounds of the parameter space, for any smooth
parameter.

The TMLE of $\theta_0$ is defined as a substitution estimator
$\tmle=\theta(\tilde P)$, where $\tilde P$ is an estimate of $P_0$
constructed such that the corresponding $\tilde \eta$ and
$\theta(\tilde P)$ solve the estimating equation
$\sum_{i=1}^n D_{\tilde \eta, \theta(\tilde P)}(O_i)=0$. The estimator
$\tilde P$ is constructed by tilting an initial estimate $\hat P$
towards a solution of the relevant estimating equation, by means of a
maximum likelihood estimator in a parametric submodel.

Specifically, a TMLE may be constructed by fitting the logistic
regression model
\begin{equation}
\logit  m_\epsilon(w) = \logit \hat  m(w) + \epsilon \frac{1}{\hat
    g(w)},\label{eq:submodel}
\end{equation}
among observations with $(A_i,M_i)=(1,1)$. Here
$\logit(p)=\log\{p(1-p)^{-1}\}$. In this expression
$\epsilon$ is the parameter of the model, $\logit \hat  m(w)$ is an
offset variable, and the initial estimates $\hat  m$ and $\hat g$
are treated as fixed. The parameter $\epsilon$ is estimated using the
empirical risk minimizer
\[\hat \epsilon = \arg\max_{\epsilon}\sum_{i=1}^n A_iM_i \{Y_i\log
  m_\epsilon(W_i) + (1-Y_i)\log(1- m_\epsilon(W_i))\}.\] The tilted
estimator of $ m_0(w)$ is defined as
$\tilde m(w) = m_{\hat \epsilon}(w) = \expit\{\logit \hat m(w)+
\hat\epsilon / \hat g(w)\}$, where $\expit(x) = \logit^{-1}(x)$, and the TMLE of $\theta_0$ is defined as
\[\tmle=\frac{1}{n}\sum_{i=1}^n \tilde  m(W_i).\]
Because the empirical risk minimizer of model (\ref{eq:submodel})
solves the score equation
\[\sum_{i=1}^n\frac{A_i\,M_i}{\hat g(W_i)}\{Y_i -  m_{\hat
    \epsilon}(W_i)\}=0,\] it follows that
$\sum_{i=1}^n D_{\tilde \eta, \tmle}(O_i)=0$ with $\tilde \eta =
(\tilde g, \tilde  m)$. Since this procedure does not update the
estimator $\hat g$, we have $\tilde g=\hat g$.

Further discussion on the construction of the above TMLE may be found
in \cite{Gruber2010t}. \cite{Porter2011} provides an excellent review
of other doubly robust estimators along with a discussion of their
strengths and weaknesses. In this article we focus on the estimators
$\aipw$ and $\tmle$ defined above, but our methods can be used to
construct enhanced versions of other doubly robust estimators.

\subsection{Analysis of Asymptotic Properties of Doubly Robust Estimators}

The analysis of the asymptotic properties of the AIPW (as well as the
TMLE or any other estimator that solves the estimating equation
(\ref{eq:EE})) may be based on standard $M$-estimation and empirical
process theory. Here we focus on an analysis of the AIPW based on the
asymptotic theory presented in Chapter 5 of \cite{vanderVaart98}.

Define the following conditions:
\begin{assumption}[Doubly robust consistency]\label{ass:DR1}
  Let $||\cdot||$ denote the $L_2(P_0)$ norm defined as
  $||f||^2=\int f^2 dP_0$. Assume
  \begin{enumerate}[label=(\roman*)]
\item There exists $\eta_1=(g_1, m_1)$
  with either $g_1 = g_0$ or $ m_1= m_0$ such that
  $||\hat m - m_1||=o_P(1)$ and $||\hat g - g_1||=o_P(1)$.
\item For $\eta_1$ as above, $||\hat m - m_1||\,||\hat g -
  g_1||=o_P(n^{-1/2})$.
\end{enumerate}
\end{assumption}
\begin{assumption}[Donsker]\label{ass:donsker}
  Let $\eta_1$ be as in~\ref{ass:DR1}-(i). Assume the class of
  functions $\{\eta=(g, m):|| m- m_1||<\delta, ||g-g_1||<\delta\}$ is
  Donsker for some $\delta >0$.
\end{assumption}
Under \ref{ass:DR1}-(i) and 2, a straightforward
application of Theorems 5.9 and 5.31 of \cite{vanderVaart98} \citep[see also
example 2.10.10 of][]{vanderVaart&Wellner96} yields
\begin{equation}
  \aipw-\theta_0= \beta(\hat\eta) +
  (\Pn - P_0)D_{\eta_1, \theta_0} + o_P\big(n^{-1/2} + |\beta(\hat\eta)|\big),\label{eq:wh}
\end{equation}
where $\beta(\hat\eta) = P_0 D_{\hat\eta, \theta_0}$. Thus, the
probability distribution of doubly robust estimators depends
on $\hat \eta$ through the ``drift'' term $\beta(\hat\eta)$. For our
parameter $\theta$ the drift term is given by
\begin{equation}
\beta(\hat\eta) = \int
  \frac{1}{\hat g}(\hat g - g_0)(\hat m - m_0)dP_0.\label{eq:defbeta}
\end{equation} Note that under
\ref{ass:DR1}, $\beta(\hat\eta)$ converges to zero in probability so
that $\aipw$ and $\tmle$ are consistent. Efficiency under
$\eta_1=\eta_0$ can be proved as follows. The Cauchy-Schwartz
inequality shows that
\[\beta(\hat\eta)\leq C||\hat  m -  m_0||\,||\hat g - g_0||,\]
for some constant $C$. Under \ref{ass:DR1} and $\eta_1=\eta_0$, we get
$\beta(\hat\eta) = o_P(n^{-1/2})$ so that (\ref{eq:wh}) yields
\begin{equation*}
  \aipw - \theta_0=(\Pn - P_0)D_{\eta_0, \theta_0} + o_P\big(n^{-1/2}\big).\label{eq:tmleeff}
\end{equation*}
An identical result holds replacing $\aipw$ by $\tmle$ in the above
display. Asymptotic normality and efficiency follows from the central
limit theorem.

In the more common doubly robust scenario in which at most one of
$ m_0$ or $g_0$ is consistently estimated, the large sample analysis
of doubly robust estimators relies on the assumption that
$\beta(\tilde\eta)$ is asymptotically linear \cite[see Appendix 18
of][]{vanderLaanRose11}. If $\hat \eta$ is estimated in a parametric
model, the delta method yields the required asymptotic
linearity. However, this assumption is hard to verify when $\hat\eta$
uses data-adaptive estimators; in fact there is no reason to expect
that it would hold in general.

In the remainder of the paper we construct drift-corrected estimators
$\daipw$ and $\dtmle$ that control the asymptotic behavior through
estimation of the drift term in the more plausible doubly robust
situation where either $g_1 = g_0$ or $m_1= m_0$, but not necessarily
both.

\begin{remark}[Asymptotic bias of the AIPW and TMLE under double
  inconsistency]\label{lemma:bias}
  Assume $\hat \eta = (\hat g, \hat \eta)$ converges to some
  $\eta_1=(g_1, m_1)$. Define $\theta_1=P_0 m_1$, and note that
  $D_{\eta_1,\theta_1}=D_{\eta_1,\theta_0}-\theta_1 + \theta_0$. Under
  \ref{ass:donsker}, an application of Theorem 5.31 of
  \cite{vanderVaart98} yields
  \begin{equation*}
    \aipw-\theta_1= \beta(\hat \eta) +
    (\Pn - P_0)D_{\eta_1, \theta_1} + o_P\big(n^{-1/2} + |\beta(\hat\eta)|\big).
  \end{equation*}
  Substituting $D_{\eta_1,\theta_1}=D_{\eta_1,\theta_0}-\theta_1 +
  \theta_0$ yields
  \begin{equation*}
    \aipw-\theta_0= \beta(\hat \eta) +
    (\Pn - P_0)D_{\eta_1, \theta_0} + o_P\big(n^{-1/2} + |\beta(\hat\eta)|\big).
  \end{equation*}
  The above expression also holds for $\aipw$ replaced with $\tmle$
  and $\hat\eta$ replaced with $\tilde\eta$. The empirical process
  term $(\Pn - P_0)D_{\eta_1, \theta_0}$ has mean zero. Thus,
  controlling the magnitude of $\beta(\hat\eta)$ and
  $\beta(\tilde\eta)$ is expected to reduce the bias of $\aipw$ and
  $\tmle$, respectively, in the double inconsistency case in which
  $m_1\neq m_0$ and $g_1\neq g_0$.
\end{remark}

\section{Repairing the AIPW Estimator Through Estimation of
  $\beta(\hat\eta)$}\label{sec:proposal}

As seen from the analysis of the previous section, the consistency
\ref{ass:DR1} with $\eta_1=\eta_0$ is key in proving the optimality
($n^{1/2}$-consistency, asymptotic normality, efficiency) of doubly
robust estimators such as the TMLE and the AIPW. The asymptotic
distribution of doubly robust estimators under violations of this
condition depends on the behavior of the drift term
$\beta(\hat\eta)$. 
We propose a method that controls the asymptotic behavior of
$\beta(\hat\eta)$. This is achieved through a decomposition into score
functions associated to estimation of $m_0$ and $g_0$. In light of
Remark~\ref{lemma:bias} controlling the magnitude and variation of
$\beta(\hat\eta)$ is also important to reduce the bias of the TMLE
when either $g_0$ or $ m_0$ are inconsistently estimated.


We introduce the following strengthened doubly robust consistency
condition:
\begin{assumption}[Strengthened doubly robust consistency]\label{ass:DR2}
  $\hat \eta = (\hat g, \hat \eta)$ converges to some
  $\eta_1=(g_1, m_1)$ in the sense that
  $||\hat  m -  m_1||=o_P(n^{-1/4})$ and $||\hat g - g_1||=o_P(n^{-1/4})$ with
  either $g_1 = g_0$ or $ m_1= m_0$.
\end{assumption}

The following lemma provides an approximation for the drift term in
terms of score function in the tangent space of each of the models for
$g_0$ and $m_0$. Such approximation is achieved through the definition
of the following univariate regression functions:
\begin{align}
  \gamma_{A,0}(W) &= P_0\big\{A = 1 \mid m_1(W)\big\},\notag\\
  \gamma_{M,0}(W) &= P_0\big\{M = 1\mid A = 1, m_1(W)\big\},\notag\\
  r_{A,0}(W) &= E_{P_0}\left\{\frac{A - g_{A,1}(W)}{g_{A,1}(W)}\mathrel{\bigg|}
               m_1(W)\right\}\label{eq:regs},\\
  r_{M,0}(W) &= E_{P_0}\left\{\frac{M - g_{M,1}(W)}{g_1(W)}\mathrel{\bigg|} A=1, m_1(W)\right\}\notag,\\
  e_0(W) &= E_{P_0}\big\{Y - m_1(W)\mid A=1,M=1,g_1(W)\big\}\notag.
\end{align}
Note that the residual regressions $r_{A,0}$, $r_{M,0}$, and $e_0$ are
equal to zero if the limits $g_{A,1}$, $g_{M,1}$, and $m_1$ of the
nuisance estimators are correct. To see this, it suffices to replace
$g_{A,0}$ for $g_{A,1}$ in $r_{A,0}$, and apply the iterated
expectation rule conditioning first on $W$.

\begin{theorem}[Asymptotic approximation of the drift term]\label{lemma:betarep}
Denote $\lambda_0=(\gamma_{A,0}, \gamma_{M,0}, r_{A,0}, r_{M,0},
e_0)$, and define the following score functions:
\begin{align*}
  D_{Y,\hat m, \lambda_0}(O) &= A\,M\left\{\frac{r_{A,0}(W)}{\gamma_{A,0}(W)}
                           + \frac{r_{M,0}(W)}{\gamma_0(W)}\right\}\{Y - \hat m(W)\}\\
  D_{M, \hat g, \lambda_0}(O) &= \frac{A\,e_0(W)}{\hat
                         g(W)}\{M - \hat g_M(W)\}\\
  D_{A, \hat g, \lambda_0}(O) &= \frac{e_0(W)}{\hat g_A(W)}\{A - \hat g_A(W)\},
\end{align*}
where $\gamma_0(w)=\gamma_{A,0}(w)\gamma_{M,0}(w)$. Under \ref{ass:DR2} we have
$\beta(\hat\eta)=P_0\{D_{A,\hat g, \lambda_0} + D_{M,\hat g, \lambda_0} +
D_{Y,\hat m, \lambda_0}\} + o_P(n^{-1/2})$.
\end{theorem}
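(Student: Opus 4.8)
The plan is to start from the explicit form of the drift in~(\ref{eq:defbeta}), namely $\beta(\hat\eta)=\int \hat g^{-1}(\hat g-g_0)(\hat m-m_0)\,dP_0$, and to expand each factor around its limit under~\ref{ass:DR2}. Writing $\hat m-m_0=(\hat m-m_1)+(m_1-m_0)$ and $\hat g-g_0=(\hat g-g_1)+(g_1-g_0)$, and linearizing $\hat g^{-1}$ around $g_1^{-1}$ (legitimate since positivity keeps $\hat g$ bounded away from zero), the integrand becomes a sum of products of these pieces. The term in $(\hat g-g_1)(\hat m-m_1)$ is a product of two genuinely estimated deviations, so by Cauchy--Schwarz and~\ref{ass:DR2} it contributes $O_P(||\hat g-g_1||\,||\hat m-m_1||)=o_P(n^{-1/2})$; the same bound disposes of every term carrying the extra factor $\hat g^{-1}-g_1^{-1}$. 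First I would collect and discard all of these negligible terms.

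What survives are the two mixed terms $\int g_1^{-1}(g_1-g_0)(\hat m-m_1)\,dP_0$ and $\int g_1^{-1}(\hat g-g_1)(m_1-m_0)\,dP_0$. Here the doubly robust half of~\ref{ass:DR2} is essential: since either $g_1=g_0$ or $m_1=m_0$, one of $(g_1-g_0)$ or $(m_1-m_0)$ vanishes identically, so only one mixed term is ever present, but I would carry both to obtain a formula valid in either configuration. I would then rewrite the surviving $g$-deviation through the algebraic identity $\hat g-g_1=\hat g_M(\hat g_A-g_{A,1})+g_{A,1}(\hat g_M-g_{M,1})$, which separates the propensity and missingness contributions and will ultimately produce the two scores $D_{A,\hat g,\lambda_0}$ and $D_{M,\hat g,\lambda_0}$.

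The heart of the argument is to identify each surviving term with the $P_0$-mean of a score. I would compute $P_0D_{Y,\hat m,\lambda_0}$, $P_0D_{A,\hat g,\lambda_0}$ and $P_0D_{M,\hat g,\lambda_0}$ by iterated expectation, conditioning on $W$ and using $E_{P_0}(AM\mid W)=g_0$ and $E_{P_0}(AMY\mid W)=g_0m_0$ (and the analogous identities for $A$ and $M$ alone). This collapses each score mean to an integral of $(\hat m-m_0)$, respectively of $(\hat g_A-g_{A,0})$ and $(\hat g_M-g_{M,0})$, against the regression functions $\gamma_{A,0},\gamma_{M,0},r_{A,0},r_{M,0},e_0$. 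The decisive move is then a projection step: because these regressions are defined as conditional expectations given the scalar summaries $m_1(W)$ and $g_1(W)$, a further application of the tower property is meant to match the fixed misspecification factors $(g_1-g_0)/g_1$ and $(m_1-m_0)$ to $r_{A,0}/\gamma_{A,0}+r_{M,0}/\gamma_0$ and to $e_0$. I would also invoke the remark following~(\ref{eq:regs}), that $r_{A,0},r_{M,0},e_0$ vanish when the relevant limit is correct, which is exactly what keeps the bookkeeping consistent with double robustness.

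The step I expect to be the main obstacle is precisely this last identification: verifying that integrating the estimated deviations against the fixed approximation-error factors reduces to the univariate-regression scores, and that the discrepancy created by conditioning on the low-dimensional summaries $m_1(W)$ and $g_1(W)$ rather than on $W$ is absorbed into $o_P(n^{-1/2})$. Controlling this discrepancy is where the $n^{-1/4}$ rates of~\ref{ass:DR2} (rather than the mere consistency of~\ref{ass:DR1}) must be combined with Cauchy--Schwarz, and careful tracking of which deviation is genuinely vanishing and which is a fixed misspecification will be required to close the estimate.
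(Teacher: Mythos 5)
Your overall route is the same as the paper's: the expansion of the drift in~(\ref{eq:defbeta}) around $(g_1,m_1)$, discarding $(\hat g-g_1)(\hat m-m_1)$ and the $\hat g^{-1}-g_1^{-1}$ remainders by Cauchy--Schwarz under \ref{ass:DR2} (this is exactly the paper's five-term decomposition (\ref{b1})--(\ref{t3})), the case analysis in which the identically vanishing factor kills one mixed term, and the identification of the surviving term with $P_0$-means of the scores via iterated expectations. However, there is a genuine gap at precisely the step you flag as the main obstacle, and the tool you propose for closing it does not suffice. First, the difficulty is misdiagnosed: the tower property creates no discrepancy from conditioning on the summaries $m_1(W)$, $g_1(W)$ ``rather than on $W$''---coarsening is exact whenever the remaining integrand factor is measurable with respect to the conditioning $\sigma$-field. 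The real problem is that the remaining factor involves the \emph{estimated} functions $\hat m(W)$, $\hat g(W)$, which are not functions of those summaries. If you then try to absorb the resulting projection discrepancy by Cauchy--Schwarz, you are pairing a factor of \emph{fixed} order with a factor of order $o_P(n^{-1/4})$: e.g., in the case $g_1=g_0$, matching (\ref{b1}) to $P_0\{D_{A,\hat g,\lambda_0}+D_{M,\hat g,\lambda_0}\}$ leaves the discrepancy $\int \hat g^{-1}(g_0-\hat g)\{(m_0-m_1)-e_0\}dP_0$; the factor $(m_0-m_1)-e_0$ is orthogonal (under the $A=M=1$ law) to functions of $g_0(W)$ but has constant-order norm when $m_1\neq m_0$, while the non-$\sigma(g_0(W))$-measurable part of $\hat g^{-1}(g_0-\hat g)$ is only $O_P(\|\hat g-g_0\|)$. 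The bound this yields is $o_P(n^{-1/4})$, not the required $o_P(n^{-1/2})$, and the $n^{-1/4}$ rates of \ref{ass:DR2} cannot rescue the pairing because one of the two factors does not shrink at all.

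The paper closes this with a device absent from your plan: auxiliary regressions that condition on \emph{both} the limit and the estimate, namely $\hat e_0(W)=E_{P_0}\{Y-m_1(W)\mid A=1,M=1,g_1(W),\hat g(W)\}$ and the analogous $\hat r_{A,0}$, $\hat r_{M,0}$, $\hat\gamma_{A,0}$, $\hat\gamma_{M,0}$ given $(m_1(W),\hat m(W))$. With these, every iterated-expectation step is exact: in the case $g_1=g_0$ the weight $(g_0-\hat g)/\hat g$ is measurable with respect to $\sigma(g_1(W),\hat g(W))$, and in the case $m_1=m_0$ the vanishing factor $m_1-\hat m$ is itself measurable with respect to $\sigma(m_1(W),\hat m(W))$. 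All error is thereby concentrated in a single final swap from the hatted regressions to the components of $\lambda_0$, which the paper bounds as genuinely second order---$O_P(\|\hat g-g_0\|^2)$ for the term (\ref{OPAM}) and $O_P(\|\hat m-m_0\|^2)$ in the other case---hence $o_P(n^{-1/2})$ under \ref{ass:DR2}. Your algebraic preliminaries (the split $\hat g-g_1=\hat g_M(\hat g_A-g_{A,1})+g_{A,1}(\hat g_M-g_{M,1})$, the identities $E_{P_0}(AM\mid W)=g_0$ and $E_{P_0}(AMY\mid W)=g_0m_0$, and the observation that $r_{A,0},r_{M,0}$ or $e_0$ vanish in the respective cases) all match the paper's computation; what is missing is this intermediate-conditioning construction, without which the decisive identification cannot be pushed below the $n^{-1/4}$ barrier.
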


Unlike expression~\ref{eq:defbeta}, the above approximation of the
drift depends only on one-dimensional nuisance parameters which are
easily estimable through non-parametric smoothing techniques.  These
one-dimensional parameters are functions of the possibly misspecified
limits of your estimators. However, in what follows this does not
prove to be problematic. In particular, $\beta(\hat\eta)$ may be
estimated as follows. First, we construct an estimator of $\lambda_0$
component-wise by fitting non-parametric regression estimators. Since
all the regression functions in (\ref{eq:regs}) are one-dimensional,
they may be estimated by fitting a kernel regression. For instance,
for a second-order kernel function $K_h$ with bandwidth $h$ the
estimator of $e_0$ is given by
\begin{equation}
  \hat e(w) = \frac{\sum_{i = 1}^nA_i\,M_i\,K_{\hat h}\{\hat g(W_i) -
    \hat g(w)\}\{Y_i - \hat m(W_i)\}}{\sum_{i=1}^nA_i\,M_i\,K_{\hat h}\{\hat g(W_i) -
    \hat g(w)\}}.\label{eq:rhat}
\end{equation}
The bandwidth is chosen as $\hat h= n^{-0.1}\hopt$, where
$\hopt$ is the optimal bandwidth chosen using
K-fold cross-validation \citep[the optimality of this selector is
discussed in][]{vanderVaart&Dudoit&vanderLaan06}.  This bandwidth
yields a convergence rate that allows application of uniform central
limit theorems \citep[see Theorems 4 and 5 of][]{gine2008uniform}. 

An estimator of the drift term may be constructed as
\begin{multline}\hat \beta(\hat \eta) = \frac{1}{n}\sum_{i=1}^n\left[\frac{\hat e(W_i)}{\hat g_A(W_i)}\{A_i-\hat g_A(W_i)\}
    +\frac{A_i\,\hat e(W_i)}{\hat g(W_i)}\{M_i-\hat g_M(W_i)\} +
  \right.\\ \left. A_iM_i\left\{\frac{\hat r_A(W_i)}{\hat\gamma(W_i)}
      + \frac{\hat r_M(W_i)}{\hat\gamma_M(W_i)}\right\}\{Y_i - \hat
    m(W_i)\}\right].\label{eq:hatbeta}
\end{multline}
In light of equation~(\ref{eq:wh}), the above estimator may be
subtracted from the AIPW (or the TMLE) to obtain a drift-corrected
estimator. We denote this estimator by
$\daipw=\aipw-\hat\beta(\hat\eta)$. 


Though sensible in principle, $\daipw$ suffers from drawbacks similar
to the standard AIPW estimator $\aipw$: it may yield an estimator out
of bounds of the parameter space and therefore have suboptimal finite
sample performance (we illustrate this in our simulation study in
Section~\ref{sec:simula}). In addition, a large sample analysis of
$\daipw$ suggests that the $n^{1/2}$-consistency of $\daipw$ requires
consistent estimation of $\lambda_0$ at the $n^{1/2}$ parametric
rate. In particular, under \ref{ass:DR1}-2, equation~(\ref{eq:wh})
yields
\begin{equation}
\label{eq:wh1}
\daipw-\theta_0= \beta(\hat\eta) - \hat\beta(\hat\eta)+
  (\Pn - P_0)D_{\eta_1, \theta_0} + o_P\big(n^{-1/2} +
  |\beta(\hat\eta)|\big).
\end{equation}
Lemma~\ref{lemma:asbeta} in the appendix shows that, under \ref{ass:DR2},
\begin{equation}
\beta(\hat\eta) - \hat\beta(\hat\eta) = -(\Pn-P_0)\{D_{A,\hat g, \lambda_0} + D_{M,\hat g, \lambda_0} +
  D_{Y,\hat m, \lambda_0}\} + o_P(n^{-1/2}).\label{eq:wh2}
\end{equation} Asymptotic linearity of
$\daipw$ would then require that $|\beta(\hat\eta)|=O_P(n^{-1/2})$, so that
the last term in the right-hand side of expression~(\ref{eq:wh1}) is
$o_P(n^{-1/2})$. This would require $\lambda_0$ to be estimated at
rate $n^{1/2}$, which is in general not achievable in the
non-parametric model (e.g., the convergence rate of a kernel
regression estimator with second order kernel and optimal bandwidth is
$n^{2/5}$). It would thus appear that the $\daipw$ estimator will not
generally be asymptotically linear if the estimator of $\lambda_0$
converges to zero more slowly than $n^{-1/2}$.

Surprisingly, the large-sample analysis of the $\dtmle$ counterpart
presented in Section~\ref{sec:tmle} below requires slower convergence
rates for the estimator of $\lambda_0$, such that a Kernel regression
estimator provides a sufficiently fast rate. This fact has been
previously noticed in the context of estimation of a counterfactual
mean by \cite{benkeser2016doubly}. 
We note that the optimal bandwidth $\hat h_{\text{opt}}$ in estimation
of $\lambda_0$ yields estimators for which uniform central limit
theorems do not apply. Therefore we propose to undersmooth using the
bandwidth $\hat h$.

\section{Targeted Maximum Likelihood Estimation with
  Doubly Robust Inference}\label{sec:tmle}

As transpires from the developments of the previous section, it is
necessary to construct estimators $\hat\eta$ such that
$\beta(\hat\eta)$ is $O_P(n^{-1/2})$. In light of
expression~(\ref{eq:wh2}), this can be achieved through the
construction of an estimator $\tilde\eta$ that satisfies
$\hat\beta(\tilde\eta)=0$. This construction is based on the fact that
$D_{Y,\hat m, \lambda_0}$, $D_{M,\hat g, \lambda_0}$, and
$D_{M,\hat g, \lambda_0}$ are score equations in the model for $m_0$,
$g_{M,0}$, and $g_{A,0}$, respectively. As a result, adding the
corresponding covariates to a logistic tilting model will tilt an
initial estimator $\hat \eta=(\hat g,\hat m)$ towards a solution
$\tilde \eta$ of the bias-reducing estimating equations
$\hat \beta(\tilde \eta)=0$, in a similar way to the logistic tilting
submodel (\ref{eq:submodel}).

The proposed drift-corrected TMLE is defined by the following algorithm:
\begin{enumerate}[label = Step~\arabic*., align=left, leftmargin=*]
\item \textit{Initial estimators.} Obtain initial estimators
  $\hat g_A$, $\hat g_M$, and $\hat m$ of $g_{A,0}$, $g_{M,0}$, and
  $ m_0$. These estimators may be based on data-adaptive predictive
  methods that allow flexibility in the specification of the
  corresponding functional forms. Construct estimators $\hat\gamma_A$,
  $\hat\gamma_M$, $\hat\mu$ of $\gamma_{A,0}$, $\gamma_{M,0}$,
  $\mu_0$, respectively, by fitting kernel regression estimators as
  described in the previous subsection.
\item \textit{Compute auxiliary covariates.} For each
  subject, compute the auxiliary covariates
\[W_1(w)=\frac{1}{\hat g(w)},\,
    W_2(w)=\frac{\hat r_A(w)}{\hat \gamma(w)} + \frac{\hat
      r_M(w)}{\hat \gamma_M(w)},\,
    Z_A(w)=\frac{\hat e(w)}{\hat g_A(w)},\, Z_M(w)=\frac{\hat e(w)}{\hat g(w)}
\]
\item \textit{Solve estimating equations.} Estimate the parameter
  $\epsilon = (\epsilon_A, \epsilon_M, \epsilon_{Y,1}, \epsilon_{Y,2})$ in the logistic tilting models
  \begin{align}
    \logit  m_\epsilon(w) &= \logit \hat m(w)  + \epsilon_{Y,1} W_1(w) +
                             \epsilon_{Y,2} W_2(w),\label{eq:submodelY}\\
    \logit g_{M,\epsilon}(w) &= \logit \hat g_M(w) + \epsilon_M
                               Z_M(w).\label{eq:submodelT}\\
    \logit g_{A,\epsilon}(w) &= \logit \hat g_A(w) + \epsilon_A
                               Z_A(w)\label{eq:submodelT}
  \end{align}
  Here, $\logit \hat m(w)$, $\logit \hat g_A(w)$, and
  $\logit \hat g_M(w)$ are offset variables (i.e., variables with
  known parameter equal to one). The above parameters may be estimated
  by fitting standard logistic regression models. For example,
  $(\epsilon_{Y,1}, \epsilon_{Y,2})$ may be estimated through a
  logistic regression model of $Y$ on $(W_1,W_2)$, with no intercept
  and with offset $\logit \hat m(W)$ among observations with
  $(A,M)=(1,1)$. Likewise, $\epsilon_M$ is estimated through a
  logistic regression model of $M$ on $Z_M$ with no intercept and an
  offset term equal to $\logit \hat g_M(W)$ among observations with
  $A=1$. Lastly, $\epsilon_A$ may be estimated by fitting a logistic
  regression model of $A$ on $Z_A$ with no intercept and an offset
  term equal to $\logit \hat g_A(W)$ using all observations. Let
  $\hat\epsilon$ denote these estimates.
\item \textit{Update estimators and iterate.} Define the updated
  estimators as $\hat  m =  m_{\hat \epsilon}$,
  $\hat g_M = g_{M,\hat\epsilon}$, and $\hat g_A=g_{A,\hat
    \epsilon}$. Repeat steps 2-4 until convergence. In practice, we
  stop the iteration once
  $\max\{|\hat\epsilon_A|, |\hat\epsilon_M|, |\hat\epsilon_{Y,1}|,
  |\hat\epsilon_{Y,2}|\}< 10^{-4}n^{-3/5}$.
\item \textit{Compute TMLE.} Denote the estimators in the last step of
  the iteration with $\tilde m$, $\tilde g_M$, and $\tilde g_M$. The
  drift-corrected TMLE of $\theta_0$ is defined as
\[\dtmle = \frac{1}{n}\sum_{i=1}^n \tilde m(W_i).\]
\end{enumerate}

The large sample distribution of the above TMLE is given in the
following theorem:

\begin{theorem}[Doubly Robust Asymptotic Distribution of $\dtmle$]\label{theo:dr}
Assume \ref{ass:donsker} and
  \ref{ass:DR2} hold for $\tilde\eta$, and denote the limit of
  $\tilde \eta$ with $\eta_1$. Then
  \[n^{1/2}(\tmle - \theta_0)\to N(0, \sigma^2),\]where
  $\sigma^2 = \var\{D_{\dr}(O)\}$ and
  $D_{\dr}(O)=D_{\eta_1, \theta_0}(O) - D_{Y,m_1,\lambda_0}(O) -
  D_{M,g_1,\lambda_0}(O) - D_{A,g_1,\lambda_0}(O)$.
\end{theorem}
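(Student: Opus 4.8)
The plan is to assemble the expansion for $\dtmle$ from three facts already in hand: the generic $M$-estimator expansion~(\ref{eq:wh}), the drift approximation of Theorem~\ref{lemma:betarep}, and the representation of the drift-correction discrepancy in~(\ref{eq:wh2}). The whole argument reduces $n^{1/2}(\dtmle-\theta_0)$ to an empirical process applied to a single fixed, square-integrable function, whereupon the central limit theorem gives the conclusion. Note first that \ref{ass:DR2} implies \ref{ass:DR1}, since the product of two $o_P(n^{-1/4})$ terms is $o_P(n^{-1/2})$, so all results derived under \ref{ass:DR1} are available for $\tilde\eta$.

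I would begin with the starting expansion. Because the targeting step forces $\dtmle$ and $\tilde\eta$ to solve $\Pn D_{\tilde\eta,\dtmle}=0$, the derivation of~(\ref{eq:wh}) applies verbatim with $(\aipw,\hat\eta)$ replaced by $(\dtmle,\tilde\eta)$:
\[\dtmle-\theta_0=\beta(\tilde\eta)+(\Pn-P_0)D_{\eta_1,\theta_0}+o_P\big(n^{-1/2}+|\beta(\tilde\eta)|\big).\]
To control $\beta(\tilde\eta)$, the key observation is that the scores of the tilting submodels~(\ref{eq:submodelY})--(\ref{eq:submodelT}) coincide exactly with the three summands defining $\hat\beta$; hence at the end of the iteration the stopping rule $\max_j|\hat\epsilon_j|<10^{-4}n^{-3/5}$ forces $\hat\beta(\tilde\eta)=o_P(n^{-1/2})$. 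Applying~(\ref{eq:wh2}) to $\tilde\eta$ and using $\hat\beta(\tilde\eta)=o_P(n^{-1/2})$ then gives
\[\beta(\tilde\eta)=-(\Pn-P_0)\{D_{A,\tilde g,\lambda_0}+D_{M,\tilde g,\lambda_0}+D_{Y,\tilde m,\lambda_0}\}+o_P(n^{-1/2}).\]
In particular $\beta(\tilde\eta)=O_P(n^{-1/2})$, so the remainder in the preceding display collapses to $o_P(n^{-1/2})$.

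Substituting the second display into the first yields
\[\dtmle-\theta_0=(\Pn-P_0)D_{\eta_1,\theta_0}-(\Pn-P_0)\{D_{A,\tilde g,\lambda_0}+D_{M,\tilde g,\lambda_0}+D_{Y,\tilde m,\lambda_0}\}+o_P(n^{-1/2}).\]
The remaining step is to replace the estimated $\tilde g,\tilde m$ inside the empirical process by their limits $g_1,m_1$. Each score is a fixed Lipschitz transformation of $g$ (respectively $m$) with $\lambda_0$ held fixed, so standard preservation results render the class of such scores Donsker whenever the class in \ref{ass:donsker} is; combined with $\|\tilde g-g_1\|,\|\tilde m-m_1\|=o_P(1)$ from \ref{ass:DR2}, asymptotic equicontinuity gives $(\Pn-P_0)\{D_{A,\tilde g,\lambda_0}-D_{A,g_1,\lambda_0}\}=o_P(n^{-1/2})$ and likewise for the $M$- and $Y$-scores. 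After these substitutions the integrand becomes $D_{\eta_1,\theta_0}-D_{A,g_1,\lambda_0}-D_{M,g_1,\lambda_0}-D_{Y,m_1,\lambda_0}=D_{\dr}$, whence $n^{1/2}(\dtmle-\theta_0)=n^{1/2}(\Pn-P_0)D_{\dr}+o_P(1)$. Since $D_{\dr}$ is non-random and square integrable, the central limit theorem and Slutsky's lemma deliver $n^{1/2}(\dtmle-\theta_0)\to N(0,\var\{D_{\dr}\})$.

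The main obstacle I anticipate is the bookkeeping separating the estimated one-dimensional nuisances from their limits $\lambda_0$: the TMLE is built from estimated kernel regressions, yet $D_{\dr}$ and~(\ref{eq:wh2}) involve $\lambda_0$. This burden is already discharged by Lemma~\ref{lemma:asbeta}, which is where the undersmoothed bandwidth and the uniform central limit theorems for the kernel estimators are used, so I would rely on it rather than reprove it. The only genuinely new verification at the level of this theorem is that the Donsker/equicontinuity substitution is legitimate for the specific scores $D_{A,g,\lambda_0}$, $D_{M,g,\lambda_0}$, and $D_{Y,m,\lambda_0}$; this rests on bounding their Lipschitz constants uniformly, which the positivity assumption \ref{ass:pos} and the restriction $Y\in[0,1]$ make possible.
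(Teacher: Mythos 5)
Your proposal is correct and follows essentially the same route as the paper's proof: the expansion~(\ref{eq:wh}) applied to $(\dtmle,\tilde\eta)$, the targeting step making $\hat\beta(\tilde\eta)$ (asymptotically) zero, Lemma~\ref{lemma:asbeta} to linearize $\beta(\tilde\eta)$, and the central limit theorem. The only difference is organizational — you state the replacement of $\tilde g,\tilde m$ by $g_1,m_1$ inside the empirical process as a separate equicontinuity step, whereas the paper absorbs it into the proof of Lemma~\ref{lemma:asbeta} via Theorem 19.24 of \cite{vanderVaart98} (which also reconciles the evaluation at $\hat g,\hat m$ in~(\ref{eq:wh2}) with the evaluation at $g_1,m_1$ in the lemma's statement) — and this does not change the substance of the argument.
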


Note that, in an abuse of notation, we have denoted the limit of
$\tilde\eta$ with $\eta_1$, though this limit need not be equal to the
limit of the initial estimator $\hat\eta$.

\ref{ass:DR2}, assumed in the previous theorem, is stronger than the
standard double robustness \ref{ass:DR1}. Under \ref{ass:DR1},
$\tilde m$ or $\tilde g$ may converge to their misspecified limits
arbitrarily slowly as long as the product of their $L_2(P_0)$ norms
converges at rate $n^{1/2}$. Under \ref{ass:DR2} each estimator is
required to converge to its misspecified limit at rate $n^{1/4}$. This
is a mildly stronger condition that we conjecture is satisfied by many
data-adaptive prediction algorithms. In particular, it is satisfied by
empirical risk minimizers (minimizing squared error loss or quasi
log-likelihood loss) over Donsker classes. An example of a
data-adaptive estimator that satisfies \ref{ass:DR2} is the highly
adaptive lasso (HAL) proposed by
\cite{van2015generally}. \ref{ass:DR2} is necessary to control the
convergence rate of the estimator $\hat\lambda$. The reader interested
in the technical details is encouraged to consult the proof of the
theorem in the Supplementary Materials.

In light of Theorem~\ref{theo:dr}, the Wald-type confidence interval
$\dtmle \pm z_{\alpha} \hat\sigma/\sqrt{n}$, where $\hat\sigma^2$ is
the empirical variance of
$\hat D_{\dr}(O)=D_{\tilde \eta, \dtmle}(O) - D_{Y,\tilde m,\hat
  \lambda}(O) - D_{M,\tilde g, \hat \lambda}(O) - D_{A,\tilde g, \hat
  \lambda}(O)$ has correct asymptotic coverage $(1-\alpha)100\%$,
whenever at least one of $\tilde g$ and $\tilde m$ converges to its
true value at the stated rate. However, computation of the confidence
interval does not require one to know which of these nuisance
parameters is consistently estimated.




\section{Simulation Studies}\label{sec:simula}

We compare the performance of our proposed enhanced estimators
$\dtmle$ and $\daipw$ with their standard versions $\tmle$ and
$\aipw$, using the following data distribution:
\begin{align*}
  \logit g_{M,0}(a,w)=&\,2 -w_1+4w_2-2w_4+3w_2w_6 + 3w_1w_5w_6 -\\
&\,a(1.5-4w_1+4w_2+2w_3-7w_1w_2-3w_2w_4w_5)\\
\logit  m_0(a,w)=&\,-0.5 -w_1-w_2+w_4+2w_2w_6 + 2w_1w_5w_6 -\\
                      &\,a(2-w_1+3w_2+w_3-6w_1w_2-4w_2w_4w_5).
\end{align*}
For exogenous variables $\varepsilon_1,\ldots,\varepsilon_6$ distributed independently as uniform
variables in the interval $(0,1)$, $W_1,\ldots,W_6$ were generated as
\begin{align*}
  W_1 &= \log(\varepsilon_1 + 1)\\
  W_2 &= \varepsilon_2 / (1 + \varepsilon_1^2)\\
  W_3 &= \varepsilon_1 + 1 / (\varepsilon_3 + 1)\\
  W_4 &= \sqrt{\varepsilon_2 + \varepsilon_4}\\
  W_5 &= \varepsilon_5\varepsilon_4\\
  W_6 &= 1 / (\varepsilon_2 + \varepsilon_6 + 1).
\end{align*}
The treatment probabilities were set to $g_{A,0}(w)=0.5$, corresponding
with a randomized trial with equal allocation, and the outcome was
generated as $Y\mid \{A=a,W=w\}\sim \text{Bernoulli}(m_0(a,w))$. For
this data generating mechanism we have a treatment effect of
$\theta_0\approx 0.2328$, and
$E(Y\mid A=1,M=1) - E(Y\mid A=0,M=1)\approx 0.3258$, indicating a
strong selection bias due to informative missingness.

For each sample size $n$ in the grid $\{200,800,1800,3200,5000,7200,9800\}$,
we generate 1000 datasets with the above distribution, and test four
different scenarios for estimation of $g_{M,0}$ and $m_0$:
\begin{enumerate*}[label=(\alph*)]
\item consistent estimation of both $g_{M,0}$ and $m_0$,
\item consistent estimation of $m_0$ and inconsistent estimation of
  $g_{M,0}$,
\item consistent estimation of $g_{M,0}$ and inconsistent estimation
  of $m_0$, and
\item inconsistent estimation of both $g_{M,0}$ and $m_0$.
\end{enumerate*}

Consistent estimators of $g_{M,0}$ and $m_0$ are constructed by first
creating a model matrix containing all possible interactions of $W$ up
to fourth order, and then running $L_1$ regularized logistic
regression. Inconsistent estimation follows the standard practice of
fitting logistic regression models on main terms only. The use of
$L_1$ regularization provides an example in which the asymptotic
linearity of the drift term is not guaranteed. Since we do not assume
we know which interactions are present, the use of data-adaptive
estimators is the only possible way to obtain consistent estimators,
as it is in most real data applications.

In all scenarios, the treatment mechanism is consistently estimated
by fitting a logistic regression of $A$ on $W$ including main terms
only, even though $g_{A,0}$ is known by design.  Intuitively, the
purpose of this model fit is to capture chance imbalances of the
baseline variables $W$ between study arms for a given data set; these
imbalances can then be adjusted to improve efficiency. The general
theory underlying efficiency improvements through estimation of known
nuisance parameters such as $g_A$ is presented, e.g., by
\cite{Robins&Rotnitzky&Zhao94} and \cite{vanderLaan2003}.

We compare the performance of the four estimators in terms of four
metrics:

\begin{enumerate}[label=(\roman*)]
\item Coverage probability of a confidence interval based on the
  central limit theorem, with variance estimated
  as \[\hat\sigma^2=\frac{1}{n}\sum_{i=1}^n\text{IF}^2(O_i),\] where
  IF is the estimated influence function of the corresponding
  estimator. For $\aipw$ and $\tmle$, the influence function used is
  the efficient influence function $D_{\eta,\theta}$. For $\daipw$ and
  $\dtmle$, the influence function $D_{\dr}$ given in
  Theorem~\ref{theo:dr}.

  Confidence intervals for $\aipw$ and $\tmle$ are expected to have
  correct coverage in scenario (a), incorrect coverage in scenario
  (b), and conservative coverage in scenario (c). In light of
  Theorem~\ref{theo:dr}, the confidence interval based on $\dtmle$ is
  expected to have correct coverage in scenarios (a)-(c). The behavior
  of the confidence interval based on $\daipw$ is conjectured to have
  similar performance to the $\dtmle$, but our theory does not show
  this in general.
\item The absolute value of the bias scaled by $\sqrt{n}$. This value
  is expected to converge to zero in scenarios (a)-(c) for all
  estimators, and to diverge in scenario (d). For scenario (d), in
  light of Remark~\ref{lemma:bias}, we conjecture that $\daipw$ and
  $\dtmle$ have generally smaller bias than $\aipw$ and $\tmle$,
  respectively.
\item The squared root of the relative MSE (RMSE), scaled by
  $\sqrt{n}$. The RMSE is defined as the MSE divided by the efficiency
  bound $\var\{D_{\eta_0,\theta_0}(O)\}$. This metric is expected to
  converge to one for all estimators in scenario (a) (i.e., all
  estimators are efficient), it is expected
  to converge to some other value in scenarios (b)-(c), and it is
  expected to diverge in scenario (d).
\item The average of the estimated standard deviations $\hat\sigma$
  across 1000 datasets divided by the standard deviation of the
  estimates $\hat\theta$. This metric is expected to converge to one
  for all estimators in scenario (a), and for estimators $\daipw$ and
  $\dtmle$ in scenarios (b)-(c).
\end{enumerate}

\begin{figure}[!htb]
  \centering
  \includegraphics[scale = 0.4]{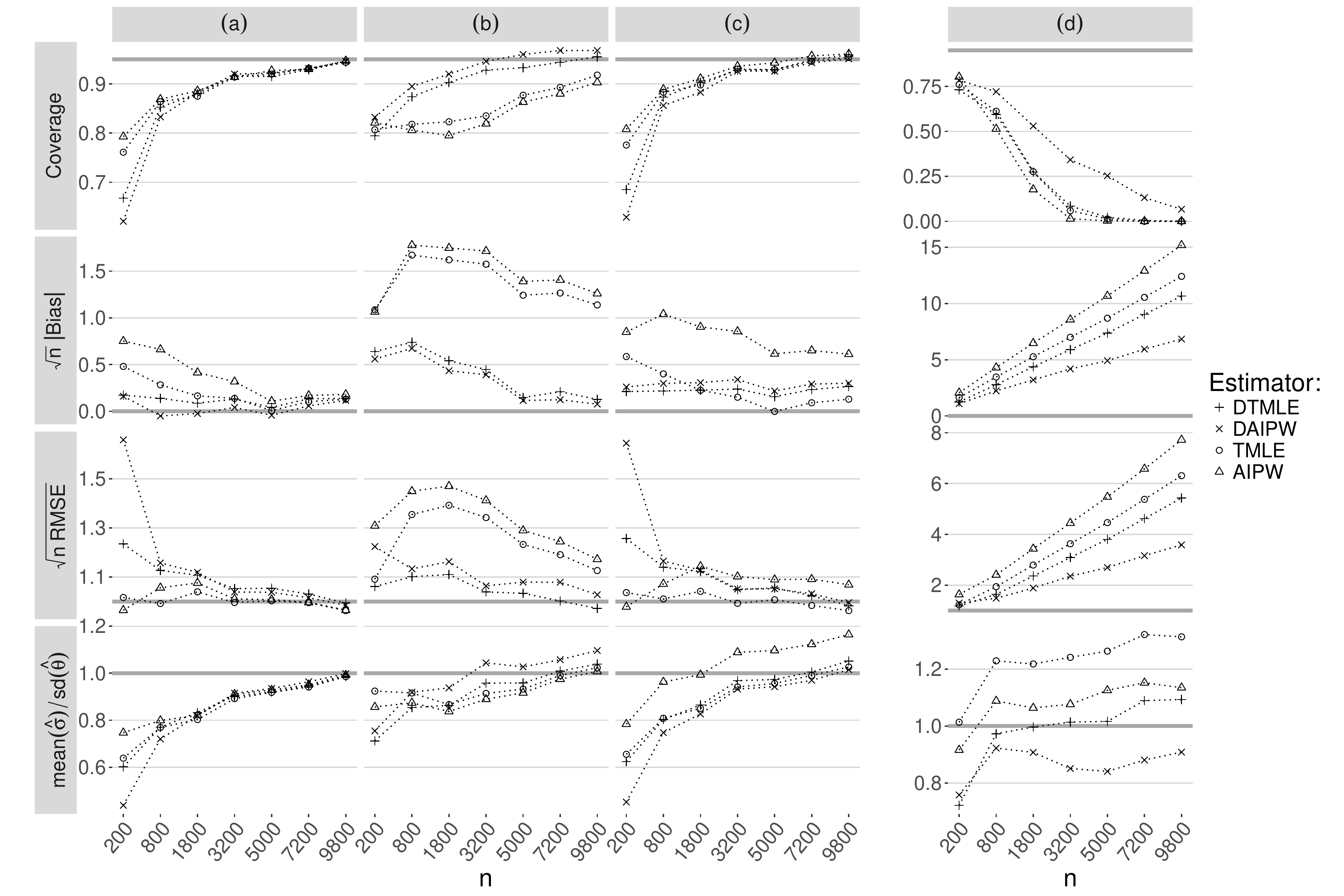}
  \caption{Results of the simulation study.}
    \label{fig:res}
\end{figure}
The results of the simulation are presented in
Figure~\ref{fig:res}. In addition to corroborating the expected
attributes of the estimators outlined in (i)-(iv) above, the following
characteristics deserve further observation:

\begin{itemize}
\item $\daipw$ has a much higher variance compared to all other
  estimators in scenario (a) for small samples ($n=200$) . This is
  possibly a consequence of inverse weighting by small probabilities
  in the definition of the correction factor $\hat\beta(\tilde\eta)$
  (see equation \ref{eq:hatbeta}). This also affects $\dtmle$, but to
  a lesser extent.
\item $\daipw$ and $\dtmle$ have considerably better performance than
  $\aipw$ and $\tmle$ in scenario (b): they achieve the asymptotic
  efficiency bound and have significantly smaller bias.
\item $\daipw$ has smaller bias than all competitors under
  scenario (d).
\end{itemize}

\section{Concluding Remarks}\label{sec:discussion}
We present estimators of the effect of treatment in randomized trials
with missing outcomes, where the outcomes are missing at random. One
of our proposed estimators, the DTMLE, is CAN under data-adaptive
estimation of the missingness probabilities and the outcome
regression, under consistency of at least one of these estimators. We
present the doubly-robust influence function of the estimator, which
can be used to construct asymptotically valid Wald-type confidence
intervals. We show that the implied asymptotic distribution is valid
under a smaller set of assumptions, compared to existing estimators.

As an anonymous referee pointed out, the method of
\cite{benkeser2016doubly} could be applied to our problem by defining
$T=AM$ and estimating $E\{E(Y\mid T=1, W)\}$. We find this approach
unsatisfactory because it ignores intrinsic properties of the
variables $A$ and $M$, which are more appropriately exploited when
modeled independently. For example, $P(A=1\mid W)$ is known in a
randomized trial, and a logistic regression model with at least an
intercept term provides a consistent estimator. Furthermore, covariate
adjustment through such logistic model is known to improve the
precision of the resulting estimator. Optimally using auxiliary
information of this type involves positing separate models for the
conditional distributions of $A$ and $M$.

Our proposed methods share connections with the balancing score theory
for causal inference \citep{Rosenbaum&Rubin83}. In particular, note
that the score equations $\Pn D_{A,\tilde g, \hat \lambda}=0$ and
$\Pn D_{M,\tilde g, \hat \lambda}=0$ are balancing equations that
ensure that the empirical mean of $\hat e(W)$ is equal to its
re-weighted mean when using weights $A_i/\tilde g_A(W_i)$ and
$A_iM_i/\tilde g(W_i)$. Covariate balanced estimators have been
traditionally used to reduce bias in observational studies and missing
data models \citep[e.g.,][]{ hainmueller2011entropy,
  imai2014covariate, zubizarreta2015stable}, but covariate selection
for balancing remains an open problem. We conjecture that our theory
may help to solve this problem by shedding light on key transformations
of the covariates that require balance, such as $\hat e(W)$.

We also note that the methods presented may be readily extended to
estimation of other parameters in observational data or randomized
trials. In particular, the estimators for the causal effect of
treatment on the quantile of an outcome presented in
\cite{diaz2015efficient} are amenable to the correction presented
here.

Finally, Donsker \ref{ass:donsker}, which may be restrictive in some
settings, may be removed through the use of a cross-validated version of our
TMLE. Such development would follow from trivial extensions of the
work of \cite{zheng2011cross}, and would be achieved by constructing a
cross-validated version of the MLE in step 2 of the TMLE algorithm
presented in Section~\ref{sec:tmle}.
\begin{appendices}

\section{Proofs}
\subsection{Theorem~\ref{lemma:betarep}}
The drift term $\beta(\hat\eta)$ may be decomposed as
\begin{align}
  \beta(\hat\eta)=&\int \frac{1}{\hat g}\{g_0-\hat
                     g\}\{m_0-m_1\}dP_0 +\label{b1}\\
                    &\int \frac{1}{g_1}\{g_0- g_1\}\{m_0-\hat m\}dP_0+\label{b2}\\
                    &\int \frac{1}{\hat g}\{g_1-\hat
                      g\}\{m_1-\hat m\}dP_0 +\label{t1}\\
                    &\int \left\{\frac{1}{\hat
                      g}-\frac{1}{g_1}\right\}\{g_0- g_1\}\{m_1-\hat m\}dP_0+\label{t2}\\
                    &\int \frac{1}{g_1}\{g_0- g_1\}\{m_1-m_0\}dP_0\label{t3}
\end{align}
Under \ref{ass:DR2} we have $(\ref{t1})+(\ref{t2})=o_P(n^{-1/2})$, and
$(\ref{t3})=0$. Denote (\ref{b1}) and (\ref{b2}) with $\beta_g(\hat
g)$ and $\beta_m(\hat m)$, respectively. Then
\begin{equation}\beta(\hat\eta) =  \beta_g(\hat g) +
  \beta_m(\hat m) + o_P(n^{-1/2}),\label{eq:biasdec}
\end{equation}
Define
\begin{align*}
  \hat\gamma_{A,0}(W) &= P_0\big\{A = 1 \mid m_1(W), \hat m(W)\big\},\\
  \hat\gamma_{M,0}(W) &= P_0\big\{M = 1\mid A = 1, m_1(W), \hat m(W)\big\},\\
  \hat r_{A,0}(W) &= E_{P_0}\left\{\frac{A - g_{A,1}(W)}{g_{A,1}(W)}\mathrel{\bigg|}
                      m_1(W), \hat m(W)\right\},\\
  \hat r_{M,0}(W) &= E_{P_0}\left\{\frac{M - g_{M,1}(W)}{g_1(W)}\mathrel{\bigg|} A=1, m_1(W), \hat m(W)\right\},\\
  \hat e_0(W) &= E_{P_0}\big\{Y - m_1(W)\mid A=1,M=1,g_1(W),\hat g(W)\big\}.
\end{align*}
First, assume $g_1=g_0$, so that
$\beta(\hat\eta) = \beta_g(\hat g) + o_P(n^{-1/2})$. We have
\begin{align}
  \beta_g(\hat g) =& \int \frac{1}{\hat g(w)}\{g_0(w)-\hat
                       g(w)\}\{ m_0(w)- m_1(w)\}dP_0(w)\notag\\
  =& \int \frac{a\,m}{\hat g(w)g_0(w)}\{g_0(w)-\hat
     g(w)\}\{y- m_1(w)\}dP_0(y,m,a,w)\notag\\
  =& \int\left[\int \frac{a\,m}{\hat g(w)g_0(w)}\{y-m_1(w)\}\{g_0(w)-\hat
     g(w)\}dP_0(y \mid a,m,w,g_0(w),\hat g(w))\right]\,dP_0(m,a,w)\notag\\
  =& \int \frac{a\,m\,\hat e_0(w)}{\hat g(w)g_0(w)}\{g_0(w)-\hat
     g(w)\}dP_0(m,a,w)\notag\\
  =& \int \frac{\hat e_0(w)}{\hat g(w)}\{g_0(w)-\hat
     g(w)\}dP_0(w)\notag\\
  =& \int \frac{\hat e_0(w)}{\hat g(w)}\{a\,m-\hat
     g(w)\}dP_0(m,a, w)\notag\\
  =& \int\left[\frac{a\,\hat e_0(w)}{\hat g(w)}\{m - \hat
     g_M(w)\} + \frac{\hat e_0(w)}{\hat g_A(w)}\{a-\hat
     g_A(w)\}\right]dP_0(m,a,w)\notag\\
  =& \int\left[\frac{a\,e_0(w)}{\hat g(w)}\{m - \hat
     g_M(w)\} + \frac{ e_0(w)}{\hat g_A(w)}\{a-\hat
     g_A(w)\}\right]dP_0(m,a,w)\label{DAMs}\\
                     &+ \int\left[\frac{a\,\{\hat e_0(w)-e_0(w)\}}{\hat g(w)}\{m - \hat
                       g_M(w)\} + \frac{\hat e_0(w)-e_0(w)}{\hat g_A(w)}\{a-\hat
                       g_A(w)\}\right]dP_0(m,a,w)\label{OPAM}.
\end{align}
Here $P_0(g_0(w),\hat g(w))$ is the distribution of the
transformation $W\to (g_0(W), \hat g(W))$, where $\hat g$ is
fixed. The third equality follows by the law of iterated expectation and
is obtained by first conditioning on the joint distribution of
$(M, A)$ and the transformations $g_0(W)$ and $\hat
g(W)$.

The term (\ref{DAMs}) is $P_0\{D_{M,\hat g,\lambda_0} +
D_{A,\hat g, \lambda_0}\}$, whereas (\ref{OPAM}) is $O_P\left(||\hat g -
  g_0||^2\right)$. Under \ref{ass:DR2} with $g_1=g_0$ the latter term
is $o_P(n^{-1/2})$, so that
\[\beta_g(\hat g) = P_0\{D_{M,\hat g,\lambda_0} +
  D_{A,\hat g, \lambda_0}\} + o_P(n^{-1/2}).\] The result follows because,
under $g_1=g_0$ we have $e_0(w)=0$, and thus $D_{Y, \hat\mu, \lambda_0}=0$.

Now assume $m_1=m_0$, we have $\beta(\hat\eta) =
\beta_m(\hat m) + o_P(n^{-1/2})$. We have
\begin{align*}
  \beta_m(\hat  m) =& \int\frac{1}{g_1(w)}\{g_0(w) -
                        g_1(w)\}\{ m_0(w) - \hat  m(w)\}dP_0(w)\\
                      =& \int\left\{\frac{g_{A,0}}{g_1(w)}\{g_{M,0}(w) -
                       g_{M,1}(w)\} +\frac{1}{g_{A,1}}\{g_{A,0}-g_{A,1}\}\right\}\{ m_0(w) -
                         \hat  m(w)\}dP_0(w)\\
  =& \int\left\{\frac{a}{g_1(w)}\{m -
     g_{M,1}(w)\} +\frac{1}{g_{A,1}}\{a-g_{A,1}\}\right\}\{ m_0(w) - \hat  m(w)\}dP_0(m,a,w)\\
  =& \int\left[a\hat r_{M,0}(w) + \hat
     r_{A,0}(W)\right]\{ m_0(w) - \hat
     m(w)\}dP_0(m,a,w)\\
  =& \int\left[\hat \gamma_A(w)\hat r_{M,0}(w) + \hat
     r_{A,0}(W)\right]\{ m_0(w) - \hat  m(w)\}dP_0(m,a,w)\\
  =& \int\frac{a\,m\,}{\hat\gamma_{A,0}(w)\hat\gamma_{M,0}(w)}\left[\hat \gamma_A(w)\hat r_{M,0}(w) + \hat
     r_{A,0}(w)\right]\{y - \hat  m(w)\}dP_0(m,a,w)\\
     =& \int a\,m\left[\frac{r_{M,0}(w)}{\gamma_{M,0}(w)} +
        \frac{r_{A,0}(w)}{\gamma_0(w)}\right]\{y - \hat  m(w)\}dP_0(m,a,w) +
        O_P(||\hat m - m_0||^2)
\end{align*}
Under \ref{ass:DR2} with $m_1=m_0$ we have $||\tilde m -
m_0||^2=o_P(n^{-1/2})$ and $r_{A,0}(w)=r_{M,0}(w)=0$. Thus $D_{M,\tilde g, \lambda_0} =
D_{A,\tilde g, \lambda_0}=0$. This completes the proof of the theorem.

\subsection{Theorem~\ref{theo:dr}}
Arguing as in equation (\ref{eq:wh}) we get
\begin{equation*}
  \dtmle-\theta_0  = \beta(\tilde \eta) +
  (\Pn - P_0)D_{\eta_1, \theta_0} + o_P\big(n^{-1/2} +
                   |\beta(\tilde\eta)|\big)
\end{equation*}
Note that, by construction (see
Section~\ref{sec:tmle}), $\hat\beta(\tilde\eta)=0$, so that
Lemma~\ref{lemma:asbeta} below gives us the asymptotic expression for
$\beta(\tilde \eta)$. Substituting this expression we get
\[\tmle-\theta_0 = (\Pn - P_0)(D_{\eta_1, \theta_0} -
  D_{M,g_1,\lambda_0} - D_{A,g_1,\lambda_0} - D_{Y,m_1,\lambda_0}) +
o_P\big(n^{-1/2} + O_P(n^{-1/2})\big).\]
The last term is $o_P(n^{-1/2})$. This, together with the central limit theorem
concludes the proof.
\begin{lemma}[Asymptotic Linearity of $\beta(\hat\eta)$]\label{lemma:asbeta}
  Assume \ref{ass:donsker} and \ref{ass:DR2}. Then
  \begin{equation*}
    \beta(\hat\eta) -\hat\beta(\hat\eta)= -(\Pn - P_0)\{D_{M,g_1,\lambda_0}+D_{A,g_1,\lambda_0} + D_{Y,m_1,\lambda_0}\}
    + o_P(n^{-1/2}).
  \end{equation*}
\end{lemma}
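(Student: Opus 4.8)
The plan is to combine the population-level expansion of the true drift from Theorem~\ref{lemma:betarep} with the definition of the estimated drift in~\eqref{eq:hatbeta}, and then split the difference into a linear empirical-process term plus a second-order remainder. Write $S_{\eta,\lambda} = D_{A,g,\lambda} + D_{M,g,\lambda} + D_{Y,m,\lambda}$ for the sum of the three score functions, with $\eta=(g,m)$, so that Theorem~\ref{lemma:betarep} reads $\beta(\hat\eta) = P_0 S_{\hat\eta,\lambda_0} + o_P(n^{-1/2})$ and, by construction, $\hat\beta(\hat\eta) = \Pn S_{\hat\eta,\hat\lambda}$, where $\hat\lambda$ collects the kernel estimators of the one-dimensional regressions in~\eqref{eq:regs}. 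Subtracting the former and adding and subtracting $P_0 S_{\hat\eta,\hat\lambda}$ gives
\[ \hat\beta(\hat\eta) - \beta(\hat\eta) = (\Pn - P_0)S_{\hat\eta,\hat\lambda} + P_0\{S_{\hat\eta,\hat\lambda} - S_{\hat\eta,\lambda_0}\} + o_P(n^{-1/2}). \]
It then suffices to show that the first term reduces to $(\Pn - P_0)S_{\eta_1,\lambda_0}$ and that the second is $o_P(n^{-1/2})$; rearranging yields the claim with the stated sign.

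For the empirical-process term I would invoke asymptotic equicontinuity. Under \ref{ass:DR2} we have $\hat\eta\to\eta_1$ in $L_2(P_0)$, and the kernel estimators $\hat\lambda$ converge to $\lambda_0$ in $L_2(P_0)$ (here one must account both for the kernel error and for the fact that the regressions are fit against $\hat g(W)$ and $\hat m(W)$ rather than their limits). Provided $S_{\hat\eta,\hat\lambda}$ lies in a Donsker class, $\|S_{\hat\eta,\hat\lambda} - S_{\eta_1,\lambda_0}\|_{L_2(P_0)}=o_P(1)$ gives $(\Pn-P_0)S_{\hat\eta,\hat\lambda} = (\Pn-P_0)S_{\eta_1,\lambda_0} + o_P(n^{-1/2})$. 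The Donsker property for the $(g,m)$ coordinates is \ref{ass:donsker}; for the kernel coordinates it follows from the undersmoothing choice $\hat h = n^{-0.1}\hopt$, which places the estimators in a class to which the uniform central limit theorems of \cite{gine2008uniform} apply. This produces exactly $-(\Pn - P_0)\{D_{M,g_1,\lambda_0} + D_{A,g_1,\lambda_0} + D_{Y,m_1,\lambda_0}\}$ once moved to the left-hand side.

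The main obstacle is the remainder $P_0\{S_{\hat\eta,\hat\lambda} - S_{\hat\eta,\lambda_0}\}$, which measures the bias of replacing the population regressions $\lambda_0$ by their kernel estimators while holding $\hat\eta$ fixed. I would control it by treating the two cases of \ref{ass:DR2} separately and exploiting that the residual regressions $r_{A,0}$, $r_{M,0}$, $e_0$ vanish in the correctly specified direction. When $g_1=g_0$, the $D_A$ and $D_M$ contributions reduce to integrals of $(\hat e - e_0)$ against $(g_0-\hat g)$, a product of two $o_P(n^{-1/4})$ factors, hence $o_P(n^{-1/2})$; when $m_1=m_0$, the analogous statement holds for the $D_Y$ contribution, whose factors are the kernel errors $(\hat r - r_0)$ and $(m_0-\hat m)=o_P(n^{-1/4})$. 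The delicate part is the ``spurious'' score that does not vanish at $\lambda_0$ in each case (the $D_Y$ term when $g_1=g_0$, and the $D_A,D_M$ terms when $m_1=m_0$), because there one factor, such as $(m_0-\hat m)$ or $(g_0-\hat g)$, need not be small. These are handled by noting that the relevant residual, e.g. $(A-g_{A,0})/g_{A,0}$, has conditional mean zero given any transformation of $W$ since the treatment mechanism is known by design and consistently estimated, so the population target of the corresponding kernel regression is zero; what remains is the kernel estimation error, whose integrated bias is rendered $o(n^{-1/2})$ precisely by the undersmoothing bandwidth, and whose stochastic part is either absorbed into the linear term already identified or is of smaller order after accounting for the diagonal of the induced V-statistic. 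Verifying that this spurious term contributes neither a first-order bias nor an extra linear term is where the argument is most technical, and is exactly the place where both the $n^{-1/4}$ rate in \ref{ass:DR2} and the undersmoothing are indispensable.
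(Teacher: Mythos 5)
Your plan coincides, step for step, with the paper's own proof: Theorem~\ref{lemma:betarep} to replace $\beta(\hat\eta)$ by $P_0 S_{\hat\eta,\lambda_0}$ (in your notation), adding and subtracting $P_0 S_{\hat\eta,\hat\lambda}$, disposing of $(\Pn-P_0)\{S_{\hat\eta,\hat\lambda}-S_{\eta_1,\lambda_0}\}$ by a Donsker/asymptotic-equicontinuity argument (the paper invokes \ref{ass:donsker}, Theorem 4 of \cite{gine2008uniform} for the undersmoothed kernel components, example 2.10.10 of \cite{vanderVaart&Wellner96}, and Theorem 19.24 of \cite{vanderVaart98}), and bounding the bias $P_0\{S_{\hat\eta,\hat\lambda}-S_{\hat\eta,\lambda_0}\}$ case by case via Cauchy--Schwartz as a product of two $o_P(n^{-1/4})$ quantities. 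The paper carries this out term by term (the $D_Y$ term explicitly, with $D_A$ and $D_M$ declared analogous), and the kernel-error rates you only gesture at are supplied there as a standalone result, Lemma~\ref{lemma:rs}: each component of $\hat\lambda-\lambda_0$ is $O_P\big(||\hat g - g_1|| + ||\hat m - m_1|| + n^{-7/20}\big)$, which under \ref{ass:DR2} is $o_P(n^{-1/4})$; that is the ingredient that legitimizes your product bounds in both the $g_1=g_0$ and $m_1=m_0$ cases, and your sketch would be complete if you stated and proved it.

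The one place you depart from the paper is the ``spurious'' score, and there your justification contains an error. You argue that the population target of the offending kernel regression is exactly zero because $(A-g_{A,0})/g_{A,0}$ has conditional mean zero given any transformation of $W$. But $\hat r_A$ and $\hat r_M$ regress the \emph{estimated} residuals $(A-\hat g_A)/\hat g_A$ and $(M-\hat g_M)/\hat g$, so the population target is $E_{P_0}\{(g_{A,0}-\hat g_A)/\hat g_A \mid \cdot\}$, which is only $O_P(||\hat g - g_0||)=o_P(n^{-1/4})$, not zero; moreover your appeal to the randomization design cannot help with the missingness component, since $\hat g_M$ is fit data-adaptively. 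Multiplied by the non-vanishing factor $||m_0-\hat m||=O_P(1)$ in the misspecified-$m$ direction, your mechanism leaves a term of order $o_P(n^{-1/4})$ rather than $o_P(n^{-1/2})$, and neither the undersmoothing bias control nor the V-statistic diagonal you mention repairs this as written. For comparison, the paper handles this case by noting that $g_1=g_0$ forces $r_{A,0}=r_{M,0}=0$, hence $D_{Y,\hat m,\lambda_0}=D_{Y,m_1,\lambda_0}=0$, and then asserts that the remaining step --- showing $\Pn D_{Y,\hat m,\hat\lambda}=o_P(n^{-1/2})$ --- ``follows trivially''; the equicontinuity half of that step is indeed covered by its Donsker argument, but the mean half $P_0 D_{Y,\hat m,\hat\lambda}$ is, via Lemma~\ref{lemma:rs} and Cauchy--Schwartz, controlled only at the same $o_P(n^{-1/4})$ order you face. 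So your instinct that this is the delicate point is sound --- it is precisely where the paper is tersest --- but your proposed resolution does not close it: a rigorous treatment must exploit finer structure, for example that the targets of $\hat r_A,\hat r_M$ are themselves $O_P(||\hat g-g_0||)$ while the stochastic kernel component, integrated against a fixed function, contributes a mean-zero linear term, and your sketch establishes neither.
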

\begin{proof}
  From Theorem~\ref{lemma:betarep}, we have
\[\beta(\hat\eta)=P_0\{D_{A,\hat g,\lambda_0} +
  D_{M,\hat m,\lambda_0} + D_{Y,\hat m,\lambda_0}\} + o_P(n^{-1/2})\]
Next, we show that
$P_0D_{Y,\hat m,\lambda_0} -\Pn D_{Y,\hat m,\hat\lambda} = -(\Pn-P_0)D_{Y,m_1,\lambda_0} + o_P(n^{-1/2}).$ The result
for the other terms follow an analogous analysis.

If $g_1(w)=g_0(w)$ we have $r_{A,0}(w)=r_{M,0}(w)=0$, which implies
$D_{Y,\hat m,\lambda_0}(o) = D_{Y,m_1,\lambda_0}(o)=0$, and the
result follows trivially. If $m_1=m_0$, we have
\[P_0D_{Y,\hat m,\lambda_0}-\Pn D_{Y,\hat m,\hat\lambda} = -(\Pn-P_0)D_{Y,\hat m,\hat\lambda} +
  P_0(D_{Y,\hat m,\lambda_0} - D_{Y,\hat m,\hat\lambda}),\]
where we added and subtracted $P_0D_{Y,\hat m,\hat\lambda}$. We have
\[  P_0(D_{Y,\hat m,\lambda_0} - D_{Y,\hat m,\hat\lambda})
  =\int g_0\left\{\frac{r_{M,0}}{\gamma_{M,0}} - \frac{\hat
      r_M}{\hat\gamma_M} + \frac{r_{A,0}}{\gamma_0} - \frac{\hat
      r_A}{\hat\gamma}\right\}\{m_0 - \hat m\}dP_0
\]
Using the Cauchy-Schwartz and triangle inequalities, we obtain
\[  P_0(D_{Y,\hat m,\lambda_0} - D_{Y,\hat m,\hat\lambda})=O_P\big(||\hat  m -
  m_0||\{||\hat r_A - r_{A,0}|| + ||\hat r_M - r_{M,0}|| +
  ||\hat\gamma_A - \gamma_{A,0}|| + ||\hat\gamma_M - \gamma_{M,0}||\}\big)\]
In light of Lemma~\ref{lemma:rs} below we get
\[  P_0(D_{Y,\hat m,r_0} - D_{Y,\hat m,\hat r})=O_P\big(||\hat  m -
 m_0||\{||\hat g - g_1|| + ||\hat  m -
 m_0|| + n^{-7/20}\}\big).\]
By \ref{ass:DR2} this term is $o_P(n^{-1/2})$.

Under \ref{ass:donsker} and \ref{ass:DR2}, $D_{Y,\hat m,\hat\lambda}$
an application of Theorem 4 of \cite{gine2008uniform} and example 2.10.10
of \cite{vanderVaart&Wellner96} yields that $D_{Y,\hat
  m,\hat\lambda}$ is in a Donsker class. Thus, according to theorem
19.24 of \cite{vanderVaart98}:
$P_0D_{Y,\hat\eta,\lambda_0}-\Pn D_{Y,\hat m,\hat\lambda} = -(\Pn-P_0)D_{Y,\eta_1,\gamma_0} +
o_P(n^{-1/2})$.
\end{proof}
\begin{lemma}\label{lemma:rs} Assume $\hat\gamma_A$, $\hat \gamma_M$, and $\hat\mu$ use the
 bandwidth $\hat h = n^{-0.1}\hopt$ and $K_h$ is a second order kernel. Then
\begin{align*}
||\hat \gamma_A  - \gamma_{A,0}|| &= O_P\big(||\hat g - g_1|| + ||\hat  m -
 m_1|| + n^{-7/20}\big)\\
||\hat \gamma_M  - \gamma_{M,0}|| &= O_P\big(||\hat g - g_1|| + ||\hat  m -
 m_1|| + n^{-7/20}\big)\\
||\hat r_A  - r_{A,0}|| &= O_P\big(||\hat g - g_1|| + ||\hat  m -
                    m_1|| + n^{-7/20}\big)\\
||\hat r_M  - r_{M,0}|| &= O_P\big(||\hat g - g_1|| + ||\hat  m -
                            m_1|| + n^{-7/20}\big)\\
||\hat e  - e_0|| &= O_P\big(||\hat g - g_1|| + ||\hat  m -
 m_1|| + n^{-7/20}\big)
\end{align*}
\end{lemma}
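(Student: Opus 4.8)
The five kernel estimators $\hat\gamma_A$, $\hat\gamma_M$, $\hat r_A$, $\hat r_M$, and $\hat e$ share a common structure: each is a one-dimensional Nadaraya--Watson regression, built with the undersmoothed bandwidth $\hat h = n^{-0.1}\hopt$, of an estimated response onto an estimated conditioning variable ($\hat m$ or $\hat g$). It therefore suffices to prove the bound for a single representative, which I take to be $\hat e$ of~(\ref{eq:rhat}); the remaining four follow by relabeling the response and the smoothing variable. The plan is to introduce an \emph{oracle} estimator $\hat e^{\circ}$, defined exactly as in~(\ref{eq:rhat}) but with the true limits $g_1$ and $m_1$ replacing $\hat g$ and $\hat m$ in both the kernel argument and the residual response, and to split
\[
||\hat e - e_0|| \le ||\hat e^{\circ} - e_0|| + ||\hat e - \hat e^{\circ}||.
\]
The first term is a pure kernel-smoothing error and the second is a plug-in perturbation due to estimating the nuisance functions.

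First I would bound the oracle term $||\hat e^{\circ} - e_0||$ using standard one-dimensional kernel regression theory. Because $K_h$ is a second-order kernel and $e_0$ is smooth (and $g_1$ is bounded away from zero with a conditioning density bounded below on its support), the estimator decomposes into a squared bias of order $\hat h^4$ and a variance of order $(n\hat h)^{-1}$. The optimal selector satisfies $\hopt \asymp n^{-1/5}$, so the undersmoothed bandwidth is $\hat h \asymp n^{-3/10}$; substituting gives a bias of order $\hat h^2 = n^{-3/5}$ and a standard deviation of order $(n\hat h)^{-1/2} = n^{-7/20}$, the latter dominating. Hence $||\hat e^{\circ} - e_0|| = O_P(n^{-7/20})$, and the uniform-in-bandwidth results of \cite{gine2008uniform} justify this rate under the present smoothness and positivity assumptions.

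Next I would control the plug-in perturbation $||\hat e - \hat e^{\circ}||$ by $O_P(||\hat g - g_1|| + ||\hat m - m_1||)$. This splits according to the two places where the nuisance estimators enter: replacing the residual response $Y - m_1(W)$ by $Y - \hat m(W)$ changes the regressed outcome by $\hat m - m_1$ and contributes a term of order $||\hat m - m_1||$; replacing the smoothing variable $g_1(W)$ by $\hat g(W)$ perturbs both the kernel weights and the point at which the regression is evaluated, and I would show, using Lipschitz continuity of the limiting conditional-expectation operator in its conditioning argument, that this contributes a term of order $||\hat g - g_1||$. Collecting the three pieces yields the stated bound $||\hat e - e_0|| = O_P(||\hat g - g_1|| + ||\hat m - m_1|| + n^{-7/20})$, and the same argument applied to the other four functions completes the proof.

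The main obstacle is the conditioning-variable part of the plug-in perturbation. Perturbing the argument of $K_{\hat h}$ naively brings in a factor $\hat h^{-1}$ from differentiating the scaled kernel, which would destroy the rate; the resolution is to treat $\hat g$ and $\hat m$ as fixed by conditioning on the sample that produced them (or via sample splitting) and to compare the \emph{population} regressions onto $\hat g(W)$ and onto $g_1(W)$ directly, so that the difference is governed by the smoothness of the conditional-expectation map rather than by the bandwidth. The Donsker~\ref{ass:donsker} together with~\ref{ass:DR2} is what makes this conditioning argument legitimate and guarantees that the resulting remainder is of the claimed order.
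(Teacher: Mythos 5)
Your proposal matches the paper's proof essentially step for step: the same reduction to the representative $\hat e$, the same oracle decomposition $||\hat e - e_0|| \le ||\hat e - \hat e_0|| + ||\hat e_0 - e_0||$ with the oracle built from the limits $(g_1, m_1)$, the same variance-dominated rate $n^{-1}\hat h^{-1} = O_P(n^{-7/10})$ (hence $n^{-7/20}$) from the undersmoothed bandwidth $\hat h = n^{-0.1}\hopt$, and the same split of the plug-in error into a response perturbation of order $||\hat m - m_1||$ and a kernel-argument perturbation of order $||\hat g - g_1||$. If anything, you are more explicit than the paper at the delicate kernel-argument step, where the paper merely writes the $K^\star_{\hat g, i} - K^\star_{g_1, i}$ decomposition and asserts the bound via the triangle inequality, while you identify and address the potential $\hat h^{-1}$ loss.
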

\begin{proof}
We prove the result for $\hat e$. The proofs for the other components
of $\hat \lambda$ follow symmetric arguments. Let
\[\hat e_0(w) = \frac{\sum_{i = 1}^n A_iM_iK_{\hat h}\{g_1(W_i) -
  g_1(w)\}\{Y_i - m_1(W_i)\}}{\sum_{i=1}^nA_iM_iK_{\hat h}\{g_1(W_i) -
  g_1(w)\}}\]
denote the kernel regression estimator that would be computed if
$m_1$ and $g_1$ were known. The triangle inequality yields
\[||\hat e - e_0||\leq ||\hat e - \hat e_0||+||\hat e_0 - e_0||\]
Under the conditions of the lemma, since
$\hat h=n^{-0.1}\hopt$ is an undersmoothing bandwidth,
the leading term of $||\hat e_0 - e_0||^2$ is the variance of a kernel
estimator, which is of order $n^{-1}\hat h^{-1}=O_P(n^{-7/10})$,
which yields $||\hat e_0 - e_0||=O_P(n^{-7/20})$. The first term concerns
estimation of $\mu_1$ and $g_1$ and may be analyzed as follows. To
simplify notation, for a given $g$, let
\[K^\star_{g, i}(x) = \frac{K_{\hat h}\{g(X_i) -
    g(x)\}}{\sum_{i=1}^nK_{\hat h}\{g(X_i) - g(x)\}}.\]
Thus
\begin{align*}
  \hat e(x) - \hat e_0(x) = &\sum_{i=1}^nA_iM_iK^\star_{\hat g,
                                    i}(x)\{Y_i-\hat m(X_i)\} - \sum_{i=1}^nA_iM_iK^\star_{g_1,
                                    i}(x)\{Y_i-m_1(X_i)\} \\
  = &\sum_{i=1}^nA_iM_i\{K^\star_{\hat g, i}(x) - K^\star_{g_1,
      i}(x)\}\{Y_i-\hat m(X_i)\} \\
&+ \sum_{i=1}^nA_iM_iK^\star_{g_1,
      i}(x)\{m_1(X_i)-\hat m(X_i)\}.
\end{align*}
Taking $||\cdot||$ on both sides along with the triangle inequality
yields  the result in
the lemma.
\end{proof}

\section{R code}\label{sec:code}

\end{appendices}
\bibliographystyle{plainnat}
\bibliography{tmle}
\end{document}